\renewcommand{\atop}[2]{\genfrac{}{}{0pt}{}{#1}{#2}}
\newcommand{\Fser}[5]{\hspace{0cm} _{#1}F_{#2} \left( \left. \atop{#3}{#4} \right| #5 \right)}
\def\inprod#1{\left\langle #1 \right\rangle}
\def\abs#1{\left\lvert #1 \right\rvert}
\def\GZ#1{ \left| #1 \right> }
\newenvironment{proof}{\noindent \textbf{Proof:}}{$\hfill \Box$\\}
\newtheorem{theorem}{Theorem}
\newtheorem{proposition}[theorem]{Proposition}
\begin{document}

\begin{center}
  {\Large \bf Wigner quantization of some one-dimensional Hamiltonians}\\[5mm]
  {\bf G.~Regniers \footnote{E-mail: Gilles.Regniers@UGent.be}, }
  {\bf and J.\ Van der Jeugt} \footnote{E-mail: Joris.VanderJeugt@UGent.be} \\[1mm]
  Department of Applied Mathematics and Computer Science, Ghent University, \\
  Krijgslaan 281-S9, B-9000 Gent, Belgium.
\end{center}

\begin{abstract}
  \noindent Recently, several papers have been dedicated to the Wigner quantization of different Hamiltonians. In these examples, many interesting mathematical and physical properties have been shown. Among those we have the ubiquitous relation with Lie superalgebras and their representations. In this paper, we study two one-dimensional Hamiltonians for which the Wigner quantization is related with the orthosymplectic Lie superalgebra $\mathfrak{osp}(1|2)$. One of them, the Hamiltonian $\hat{H} = \hat{x} \hat{p}$, is popular due to its connection with the Riemann zeros, discovered by Berry and Keating on the one hand and Connes on the other. The Hamiltonian of the free particle, $\hat{H}_f = \hat{p}^2/2$, is the second Hamiltonian we will examine. Wigner quantization introduces an extra representation parameter for both of these Hamiltonians. Canonical quantization is recovered by restricting to a specific representation of the Lie superalgebra $\mathfrak{osp}(1|2)$.
\end{abstract}

\setcounter{equation}{0}

\section{Introduction}
Wigner quantization was initiated in 1950 by Eugene Wigner~\cite{Wigner} and gained in interest by physicists and mathematicians throughout the years. The principle of Wigner quantization is to dissociate oneself from the standard canonical commutation relations. In other words, for one-dimensional systems the canonical commutation relation 
\begin{equation} \label{CCR}
  [\hat{x}, \hat{p}] = i \hbar
\end{equation}
no longer holds. Instead, the presupposed equivalence of Hamilton's equations and the Heisenberg equations will relate the position operator $\hat{x}$ and the momentum operator $\hat{p}$ in a more general way. The result is a set of so-called compatibility conditions. From the validity of \eqref{CCR} one can deduce the compatibility conditions. In this sense, Wigner quantization is a more general approach than the widely known canonical quantization, where the canonical commutation relation \eqref{CCR} is assumed to be valid.

The compatibility conditions generally lead to triple relations containing commutators and anti-commutators. Only when the theory of Lie superalgebras was developed, it became clear that they are very useful in finding solutions for these triple relations. Therefore, the notion of Wigner quantum systems was only introduced much later than Wigner's original paper, by Palev~\cite{Palev-79, Palev-82, Palev-86}.

Wigner quantization has been studied extensively before~\cite{Manko-97, Horzela-99, Horzela-2000, Kapuscik-2000}. The $n$-dimensional harmonic oscillator has proved to be fertile ground for investigation in Wigner quantization as well~\cite{Palev-94, Palev-97, King-2003a, King-2003b, Palev-2006}. As mentioned, the compatibility conditions can usually be rewritten as defining relations of a Lie superalgebra. For example, the systems discussed in~\cite{LSVdJ-06, LSVdJ-08-1, RVdJ-09-2} have solutions related to $\mathfrak{gl}(1|n)$ and $\mathfrak{osp}(1|2n)$. In the present text, Wigner quantization of $\hat{H} = \hat{x} \hat{p}$ and $\hat{H}_f = \hat{p}^2/2$ will lead to the Lie superalgebra $\mathfrak{osp}(1|2)$. If we want to determine the spectrum of these Hamiltonians, we must study the representations of $\mathfrak{osp}(1|2)$. All irreducible $\ast$-representations of $\mathfrak{osp}(1|2)$ are known and have been classified in~\cite{RVdJ-10-Varna}. For the $\ast$-structure relevant in this paper, there appears to be only one such class of representations, the positive discrete series representations. They are characterized by a positive parameter $a$. The specific value $a=1/2$ will receive a lot of attention, as it represents the canonical picture. In this way, we will be able to verify if our generalized results are consistent with the known canonical case. 

The paper is roughly divided in two big parts. In the first part we discuss the Wigner quantization of a Hermitian operator corresponding to $\hat{x} \hat{p}$, namely $\hat{H}_b = (\hat{x}\hat{p}+\hat{p}\hat{x})/2$, and in the second part we handle the Hamiltonian of the free particle. At first sight it seems odd to deal with the less known Hamiltonian first. However, the calculations are more accessible for $\hat{x} \hat{p}$ than for the free particle. Therefore, we treat the former, less difficult case in detail and we will be more succinct throughout the second part of the paper.

In Section \ref{sec-Wigner-Hxp} we show how Wigner quantization of $\hat{x} \hat{p}$ works and how it can be connected to the Lie superalgebra $\mathfrak{osp}(1|2)$. In order to know how $\hat{H}_b$, $\hat{x}$ and $\hat{p}$ act as operators, we give a classification of all irreducible $\ast$-representations of $\mathfrak{osp}(1|2)$ in Section \ref{sec_representations}. At this point, we dispose of the actions of all relevant operators in any representation of $\mathfrak{osp}(1|2)$, so we could already compute the action of $\hat{H}_b$, $\hat{x}$ and $\hat{p}$ in order to find their spectrum. However, as it turns out these actions are related to certain orthogonal polynomials. Therefore, we first present some results on Meixner-Pollaczek polynomials, Laguerre polynomials and generalized Hermite polynomials in Section \ref{sec-polynomials}. After this, we are able to determine the spectra of $\hat{H}_b$, $\hat{x}$ and $\hat{p}$ in a standard way, shown in Section \ref{sec-spectrum}. We then compute the wave functions of the system in Section \ref{sec-generalized-wave-function}. Since Wigner quantization can be seen as a more general approach than canonical quantization, we can speak of generalized wave functions. Finally, we go back to the comfort zone of canonical quantization by taking $a=1/2$. Our results prove to be compatible with what is known from the canonical setting. 

The section concerning the free particle is roughly built up in the same way. However, the computations there are less detailed and the focus lies on the results.

\section{The Berry-Keating-Connes Hamiltonian $\hat{H} = \hat{x} \hat{p}$}
The recent popularity of the Hamiltonian $\hat{H} = \hat{x} \hat{p}$ must be attributed to its possible connection with the Riemann hypothesis~\cite{Sierra-2008, Rosu-2003, Aneva-1999}. This conjecture states that the non-trivial zeros of the Riemann zeta function can all be written as $1/2 + i t_n$, where the $t_n$ are real numbers. The first speculations about a potential relation between the Hamiltonian $\hat{H}$ and the Riemann zeros have been made in 1999, by Berry and Keating on the one hand~\cite{Berry-1999a, Berry-1999b} and Connes on the other hand~\cite{Connes-1999}. However, the idea of linking a certain Hamiltonian to the Riemann hypothesis is much older. \\
The origin of this suggestion lies almost a century behind us, when P\'olya proposed that the imaginary parts of the non-trivial Riemann zeros could correspond to the (real) eigenvalues of some self-adjoint operator. This statement is known as the Hilbert-P\'olya conjecture, although Hilbert's contribution to this is unclear. For a long time the Hilbert-P\'olya conjecture was regarded as a bold speculation, but it gained in credibility due to papers by Selberg~\cite{Selberg-1956} and Montgomery~\cite{Montgomery-1973}. \\
The historical commotion around the Hamiltonian $\hat{H}_b = \hat{x} \hat{p}$ inspired us to perform the Wigner quantization of this one-dimensional system. As we shall see, a lot of interesting results emerge.

\subsection{Wigner quantization and $\mathfrak{osp}(1|2)$ solutions} \label{sec-Wigner-Hxp}
In this section, we will perform the Wigner quantization of the Hamiltonian $\hat{H} = \hat{x} \hat{p}$. The easiest Hermitian operator corresponding to $\hat{x} \hat{p}$ is
\begin{equation}
  \hat{H}_b = \frac{1}{2} (\hat{x}\hat{p}+\hat{p}\hat{x}).
\end{equation}
In the procedure of Wigner quantization, one starts from the operator form of Hamilton's equations and the equations of Heisenberg. The former take the explicit form
\[
  \dot{\hat{x}} = \frac{\partial \hat{H}_b}{\partial \hat{p}} = \hat{x},  \qquad
  \dot{\hat{p}} = -\frac{\partial \hat{H}_b}{\partial \hat{x}} = -\hat{p}
\]
and the equations of Heisenberg can be written as
\[
  \dot{\hat{x}} = \frac{i}{\hbar} [\hat{H}_b,\hat{x}], \qquad
  \dot{\hat{p}} = \frac{i}{\hbar} [\hat{H}_b,\hat{p}].
\]
One then expresses the compatibility between these operator equations, thus creating a pair of compatibility conditions. For simplicity of notation, we set $\hbar=1$ and we find
\[
  [\hat{H}_b,\hat{x}] = -i\hat{x}, \qquad [\hat{H}_b,\hat{p}] = i\hat{p},
\]
or equivalently
\begin{equation} \label{CCs}
  [\{ \hat{x}, \hat{p} \},\hat{x}] = -2i\hat{x}, \qquad [\{ \hat{x}, \hat{p} \},\hat{p}] = 2i\hat{p},
\end{equation}
where $\{ \hat{x}, \hat{p} \}$ denotes the anticommutator between $\hat{x}$ and $\hat{p}$: $\{ \hat{x}, \hat{p} \} = \hat{x}\hat{p} + \hat{p}\hat{x}$.

The goal is now to find Hilbert spaces in which the operators $\hat{x}$ and $\hat{p}$ act as self-adjoint operators in such a way that they satisfy the compatibility conditions \eqref{CCs}. In our future calculations, we are not allowed to make any assumptions on the commutation relations between $\hat{x}$ and $\hat{p}$. The reader is urged to verify that all these computations take this restriction into account. The strategy will be to identify the algebra generated by $\hat{x}$ and $\hat{p}$ subject to \eqref{CCs}. Since the relations \eqref{CCs} contain anticommutators, Lie superalgebras come into the picture. In particular our problem will be connected to $\mathfrak{osp}(1|2)$ and representations of this Lie superalgebra will give us the demanded Hilbert spaces. 

\begin{proposition}
  The operators $\hat{x}$ and $\hat{p}$, subject to the relations \eqref{CCs}, generate the Lie superalgebra $\mathfrak{osp}(1|2)$.
\end{proposition}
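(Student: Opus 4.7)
My plan is to recognise \eqref{CCs}, up to a trivial rescaling, as the single-mode paraboson presentation of $\mathfrak{osp}(1|2)$. Recall that $\mathfrak{osp}(1|2)$ is five-dimensional, with an $\mathfrak{sl}(2)$ even part and a two-dimensional odd part, and admits the following presentation: it is the Lie superalgebra on two odd generators $b^+,b^-$ modulo the single triple relation
\[
  [\{b^+, b^-\}, b^\pm] = \pm 2\, b^\pm,
\]
from which all other paraboson identities (e.g.\ $[\{b^+,b^+\},b^-]=-4b^+$) follow by the super-Jacobi identity; the even subspace is then spanned by $(b^+)^2$, $(b^-)^2$ and $\{b^+,b^-\}$.

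First I would declare $\hat{x}$ and $\hat{p}$ to be odd and look for scalars $\lambda,\mu$ such that $b^+=\lambda\hat{x}$, $b^-=\mu\hat{p}$ converts \eqref{CCs} into the above paraboson relation. Direct substitution gives the single requirement $\lambda\mu=i$, so the simplest choice is $b^+=\hat{x}$, $b^-=i\hat{p}$. This produces a surjective Lie-superalgebra homomorphism from $\mathfrak{osp}(1|2)$ onto the Lie superalgebra $\mathcal{L}$ generated by $\hat{x},\hat{p}$ subject to \eqref{CCs}.

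For injectivity I would argue by dimension. The odd part of $\mathcal{L}$ is visibly spanned by $\hat{x}$ and $\hat{p}$; its even part, being generated by anticommutators of odd elements, is a priori spanned by $\hat{x}^2$, $\hat{p}^2$ and $\{\hat{x},\hat{p}\}$. A short computation using only \eqref{CCs} shows this three-dimensional subspace is closed under the bracket and under commutation with $\hat{x},\hat{p}$, so no new independent elements arise. Matching the totals $3+2=5=\dim\mathfrak{osp}(1|2)$ then upgrades the surjection to an isomorphism. The main subtlety is precisely the non-canonical bookkeeping emphasised in the paper: derived identities such as $[\hat{x}^2,\hat{p}]=2i\hat{x}$ must be obtained by writing $\hat{x}\hat{p}=\{\hat{x},\hat{p}\}-\hat{p}\hat{x}$ and substituting into \eqref{CCs}, rather than invoking the illegitimate shortcut $[\hat{x},\hat{p}]=i$.
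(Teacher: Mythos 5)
Your proof is correct as a proof of the literal statement, but it takes a different route from the paper. The paper sets $b^\pm=(\hat{x}\mp i\hat{p})/\sqrt{2}$, rewrites \eqref{CCs} as $[\hat{H}_b,b^\pm]=-ib^\mp$, i.e.\ $[(b^-)^2,b^+]=2b^-$ and $[(b^+)^2,b^-]=-2b^+$, and then converts these into $[\{b^-,b^+\},b^\pm]=\pm 2b^\pm$ by expanding the (anti)commutators in the ambient associative operator algebra; you instead take the asymmetric rescaling $b^+=\hat{x}$, $b^-=i\hat{p}$ (your condition $\lambda\mu=i$ checks out) and invoke the paraboson presentation of $\mathfrak{osp}(1|2)$ as a known fact, with the remaining triple relations recovered from the super-Jacobi identity $[\{x,y\},z]+[\{y,z\},x]+[\{z,x\},y]=0$ --- which indeed gives $[\{b^+,b^+\},b^-]=-2[\{b^+,b^-\},b^+]=-4b^+$. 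Two remarks. First, the one substantive cost of your choice: the paper's generators satisfy $(b^\pm)^\dagger=b^\mp$, which is exactly the $\ast$-condition used in Section 2.2 to single out the positive discrete series representations; your $b^+=\hat{x}$, $b^-=i\hat{p}$ gives $(b^+)^\dagger=b^+$, so while it proves the abstract algebraic claim, it is not the identification the rest of the paper can be built on. Second, your ``injectivity by dimension'' step only shows $\dim\mathcal{L}\le 5$ (a spanning set is an upper bound, not an equality), so it does not by itself upgrade the surjection to an isomorphism; the clean way to finish is to note that $\mathfrak{osp}(1|2)$ is simple, so the surjection is an isomorphism as soon as $\hat{x}$ and $\hat{p}$ are not both zero. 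The paper sidesteps this by contenting itself with showing that the compatibility conditions are equivalent to the defining relations, which is the same level of rigor.
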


\begin{proof}
Let us define new operators $b^+$ and $b^-$:
\begin{equation} \label{b+b-}
  b^\pm = \frac{\hat{x} \mp i\hat{p}}{\sqrt{2}}.
\end{equation}
These operators should satisfy $(b^\pm)^\dagger = b^\mp$, where the dagger operation stands for the ordinary Hermitian conjugate. In terms of the $b^\pm$, the operators $\hat{H}_b$, $\hat{x}$ and $\hat{p}$ take the form
\[
  \hat{H}_b = \frac{i}{2} \left( (b^+)^2 - (b^-)^2 \right), \qquad
  \hat{x}   = \frac{b^+ + b^-}{\sqrt{2}},                   \qquad
  \hat{p}   = \frac{i(b^+ - b^-)}{\sqrt{2}}.
\]
The compatibility conditions \eqref{CCs} are equivalent to the equations $[\hat{H}_b, b^\pm] = -i b^\mp$, which in turn can be written as
\[
  [(b^-)^2, b^+] = 2b^- \qquad \mbox{ and } \qquad [(b^+)^2, b^-] = -2b^+,
\]        
using the previous expression of $\hat{H}_b$ in terms of $b^+$ and $b^-$. By writing down the commutators and anticommutators explicitly, one sees that the latter two relations are equivalent to
\begin{equation} \label{osp_def_rel}
  \left[ \{b^-, b^+\}, b^\pm \right] = \pm 2b^\pm.
\end{equation}
These equations are known; they are the defining relations of the Lie superalgebra $\mathfrak{osp}(1|2)$~\cite{Ganchev}. \\
\end{proof}

For now, we only need to know that $\mathfrak{osp}(1|2)$ is generated by the odd elements $b^+$ and $b^-$, subject to the relations \eqref{osp_def_rel}. The even elements of $\mathfrak{osp}(1|2)$ are $h$, $e$ and $f$, defined by
\begin{equation} \label{def_hef}
  h = \frac{1}{2} \, \{ b^+, b^- \}, \qquad
  e = \frac{1}{2} \, (b^+)^2, \qquad
  f = - \frac{1}{2} \, (b^-)^2,
\end{equation}
satisfying the commutation relations
\[
  [h,e]=2e, \qquad [h,f]=-2f, \qquad [e,f]=h.
\]
If we regard the operators $\hat{H}_b$, $\hat{x}$ and $\hat{p}$ as operators on a certain representation space of $\mathfrak{osp}(1|2)$, we see that the Hamiltonian $\hat{H}_b$ is an even operator that can be written as
\[
  \hat{H}_b = i(e+f).
\]
It is also clear that the position and momentum operators are odd operators on this representation space.

We have now restated the crucial operators in terms of elements of the Lie superalgebra $\mathfrak{osp}(1|2)$. One of our main objectives, however, is to determine the spectrum of the operators $\hat{H}_b$, $\hat{x}$ and $\hat{p}$. In order to achieve this goal, we need to have a Hilbert space in which we can determine how the operators act. Lie superalgebra representations provide us with such a useful framework. In particular, we will work with irreducible $\ast$-representations of $\mathfrak{osp}(1|2)$.

\subsection{Irreducible $\ast$-representations of $\mathfrak{osp}(1|2)$} \label{sec_representations}
The $\mathbb{Z}_2$-graded algebra $\mathfrak{osp}(1|2)$, shortly described at the end of section \ref{sec-Wigner-Hxp} can be equipped with a so-called $\ast$-structure. This means that there exists an anti-linear anti-multiplicative involution $X \mapsto X^\ast$. So for $X, Y \in \mathfrak{osp}(1|2)$ and $a, b \in \mathbb{C}$ we have that $(aX+bY)^\ast = \bar{a}X^\ast + \bar{b}Y^\ast$ and $(XY)^\ast = Y^\ast X^\ast$. In our case, this $\ast$-structure is provided by the action $X \mapsto X^\dagger$. Restricting the $\ast$-structure to the even subalgebra of $\mathfrak{osp}(1|2n)$ gives us the Lie algebra $\mathfrak{su}(1,1)$. So $\mathfrak{su}(1,1)$ is generated by $h$, $e$ and $f$, defined in \eqref{def_hef}, which are related by the $\ast$-operation as follows:
\[
  h^\dagger = h, \qquad e^\dagger = -f, \qquad f^\dagger = -e.
\]
We will be working with representations of $\ast$-algebras. If the inner product that is defined on the representation space satisfies
\[
  \inprod{\pi(X)v, w} = \inprod{v, \pi(X^\ast)w},
\]
these representations are called $\ast$-representations. We emphasize that we choose this inner product to be antilinear in the first component. By doing so, we follow the convention accepted among physicists. \\
All irreducible $\ast$-representations of $\mathfrak{osp}(1|2)$ have been classified before. In 1981 this was done by Hughes~\cite{Hughes-81} using the shift operator technique, and much later by Regniers and Van der Jeugt~\cite{RVdJ-10-Varna} who also showed that two distinct classes of representations in~\cite{Hughes-81} are in fact equivalent. We summarize the results of the latter reference in order to have a clear image of all irreducible $\ast$-representations of $\mathfrak{osp}(1|2)$.
It turns out that there is only one such class of representations compatible with the $\ast$-condition $\bigl( b^\pm \bigr)^\ast = b^\mp$, characterized by a positive parameter $a$. These representations $\rho_a$ are called the positive discrete series representations and map the elements of $\mathfrak{osp}(1|2)$ to operators in $\ell^2(\mathbb{Z}_+)$. The representation space $V = \ell^2(\mathbb{Z}_+)$ is the Hilbert space of all square summable complex sequences. It has standard basis vectors $e_k, (k=0,1,2, \ldots)$, which are orthonormal with respect to the inner product on $V$. The parameter $a$ characterizing the representation is defined by
\[
  \rho_a(h) e_0 = a e_0.
\]
So $e_0$ is assumed to be an eigenvector of $\rho_a(h)$ with eigenvalue $a$. It is possible to show that, under this assumption, the representation space $V$ can be written as
\[
  V  = V_0 \oplus V_1.
\]
Separately, $V_0$ and $V_1$ are both lowest weight representation spaces of $\mathfrak{su}(1,1) \subset \mathfrak{osp}(1|2)$. In the remaining of this text, $V_0$ and $V_1$ will be referred to as the even and odd subspace respectively, and their basis elements are $e_{2n}$ and $e_{2n+1}$, with $n \geq 0$. This information can be thrown into a picture as follows

\vspace{1.0cm}

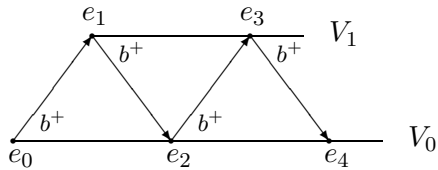
\begin{figure}[ht] 
\setlength{\unitlength}{0.7mm}
\begin{center}
  \begin{picture}(80,1)
    \drawline(15,0)(55,0)
    \drawline(0,-20)(70,-20)
    \put(60,-1){$V_1$}
    \put(75,-21){$V_0$}
    \multiput(0,-20)(30,0){3}{\circle*{1}}
    \put(-1,-24){$e_0$}
    \put(29,-24){$e_2$}
    \put(59,-24){$e_4$}
    \multiput(15,0)(30,0){2}{\circle*{1}}
    \put(13,3){$e_1$}
    \put(43,3){$e_3$}
    \multiput(15,0)(30,0){2}{\vector(3,-4){15}}
     \multiput(5,-18)(30,0){2}{\footnotesize{$b^+$}}
    \multiput(0,-20)(30,0){2}{\vector(3,4){15}}
     \multiput(20,-5)(30,0){2}{\footnotesize{$b^+$}}
  \end{picture}  
\end{center}
\vspace{1.5cm}
\caption{The representation space $V = V_0 \oplus V_1$}
\label{fig_rep_space}
\end{figure}
\noindent In the context of parity-specific terminology, we mention that $\rho_a(h)$, $\rho_a(e)$ and $\rho_a(f)$ are called even operators because their action is confined to one or the other subspace $V_0$ or $V_1$. Similarly, $\rho_a(b^+)$ and $\rho_a(b^-)$ are odd operators since they map from one subspace onto the other. This can be seen in Figure \ref{fig_rep_space}. Such an intuitive picture is helpful in order to understand one of the main parts of our paper: determining the spectrum of the operators $\hat{H}_b$, $\hat{x}$, $\hat{p}$ and later on $\hat{H}_f$. For mathematical arguments behind this classification, we refer to~\cite{RVdJ-10-Varna}. \\
Now the action of $b^+$ and $b^-$ must be determined for both representation spaces separately. We shall identify the representation space with an $\mathfrak{osp}(1|2)$ module and write $\rho_a(X) v = Xv$ from now on. Then we have~\cite{RVdJ-10-Varna}
\begin{equation} \label{action_b+b-}
  \begin{array}{rcl} 
    \rho_a(b^+) e_{2n}   & = & b^+ e_{2n} = \sqrt{2(n+a)} e_{2n+1} \\
    \rho_a(b^-) e_{2n}   & = & b^- e_{2n} = \sqrt{2n} \, e_{2n-1}  \\
    \rho_a(b^+) e_{2n+1} & = & b^+ e_{2n+1} = \sqrt{2(n+1)} e_{2n+2} \\
    \rho_a(b^-) e_{2n+1} & = & b^- e_{2n+1} = \sqrt{2(n+a)} e_{2n} 
  \end{array}
\end{equation}
We leave it to the reader to calculate the action of $h$, $e$ and $f$ on both subspaces with the help of their definitions \eqref{def_hef}. It is interesting to give the action of $[\hat{x}, \hat{p}]$:
\[
  [\hat{x}, \hat{p}] \, e_{2n} = 2ai \, e_{2n},        \qquad
  [\hat{x}, \hat{p}] \, e_{2n+1} = 2(1-a)i \, e_{2n+1}.
\]
We are left with the canonical commutation relation \eqref{CCR}, for $\hbar=1$, when $a$ takes the value $1/2$. This is one way to see that this represents the canonical case. \\
Now that we know the nature of the Hilbert spaces in which our operators act, we can try to find eigenvectors of $\hat{H}_b$, $\hat{x}$ and $\hat{p}$. In section \ref{sec-spectrum} we will see that these eigenvectors, and thus the spectrum of the respective operators, are related to orthogonal polynomials. Therefore, we need to introduce the relevant orthogonal polynomials and some of their most interesting properties.

\subsection{Some results on orthogonal polynomials} \label{sec-polynomials}
The most important formulas on all standard orthogonal polynomials are collected in the Askey-scheme~\cite{Koekoek, Koekoek2}. In this section, we will gather modifications and combinations of some formulas regarding the orthogonal polynomials that will arise in the following sections. The polynomials in question are Meixner-Pollaczek polynomials, Laguerre polynomials and generalized Hermite polynomials.

\subsubsection{Meixner-Pollaczek polynomials} 
The classical Meixner-Pollaczek polynomials~\cite{Koekoek, Koekoek2} are defined by
\[
  P_n^{(\lambda)}(x; \phi) = \frac{(2 \lambda)_n}{n!} \, e^{in \phi} \,
                             _2F_1 \left( \left. \atop{-n, \, \lambda + ix}{2 \lambda} \right|
                                                   1-e^{-2i \phi} \right),
\]
where $_2F_1$ denotes the usual hypergeometric series and $(a)_n = a(a-1) \ldots (a-n+1)$ is the Pochhammer symbol. As all orthogonal polynomials, the Meixner-Pollaczek polynomials satisfy a certain orthogonality relation and a three term recurrence relation. Both formulas are quite involved and can be found in~\cite{Koekoek}. Instead, we will use the notation $P_n(E)$ for $P_n^{(\frac{a}{2})}(E; \frac{\pi}{2})$, so we have
\[
  P_n(E) = i^n \frac{(a)_n}{n!} \, _2F_1 \left( \left. \atop{-n, \frac{a}{2} + iE}{a} \right| 2 \right).
\]
For these specific polynomials, the orthogonality relation takes the form
\begin{equation} \label{orthog_MP}
    \frac{1}{2 \pi} \int_{-\infty}^{+\infty} 
                    \lvert \, \Gamma \left( \frac{a}{2} + iE \right) \rvert^2  P_m(E) \, P_n(E) dE 
  = \frac{\Gamma(n+a)}{2^a \, n!} \, \delta_{mn},
\end{equation}
and the recurrence formula is
\[
  2E \, P_n(E) = (n+a-1) \, P_{n-1}(E) + (n+1) \, P_{n+1}(E).
\]
We need a normalized version of the Meixner-Pollaczek polynomials, which we will denote by $\tilde{P}_n(E)$:
\[
  \tilde{P}_n(E) = \frac{\lvert \Gamma(\frac{a}{2} + iE) \rvert}{2} \,
                   \sqrt{ \frac{2^a \, n!}{\pi \, \Gamma(n+a)} } P_n(E).
\]
Actually, we should speak about the pseudo-normalized Meixner-Pollaczek functions. For reasons that will become clear later, we like them to satisfy the following orthogonality relation:
\begin{equation} \label{orthog_rel_MP}
  \int_{-\infty}^{+\infty} \tilde{P}_m(E) \, \tilde{P}_n(E) dE = \frac{1}{2} \, \delta_{mn}.
\end{equation}
Of course, the recurrence relation is also different for the $\tilde{P}_n(E)$. It is given by
\begin{equation} \label{rec_normal_MP}
  2E \, \tilde{P}_n(E) = \sqrt{n(n+a-1)} \, \tilde{P}_{n-1}(E) + \sqrt{(n+1)(n+a)} \, \tilde{P}_{n+1}(E).
\end{equation}
This equation will come back in a different context in section \ref{subsec_spectrum_H}.

\subsubsection{Laguerre polynomials}
Laguerre polynomials will only be important for the free particle, but since they are also needed to define the generalized Hermite polynomials, we already summarize the main results for Laguerre polynomials at this moment. The Laguerre polynomials~\cite{Koekoek, Koekoek2} are defined by
\begin{equation} \label{laguerre-pol}
  L_n^{(\alpha)}(x) = \frac{(\alpha+1)_n}{n!} \,
                      _1F_1 \left( \left. \atop{-n}{\alpha + 1} \right| x \right).
\end{equation}
The orthogonality of the Laguerre polynomials is expressed by
\[
    \int_0^\infty e^{-x} x^\alpha L_m^{(\alpha)}(x) \, L_n^{(\alpha)}(x) dx 
  = \frac{\Gamma(n+\alpha+1)}{n!} \delta_{mn}, \qquad \alpha > -1.
\]
We define an altered version of these polynomials by
\[
  \tilde{L}_n^{(\alpha)}(x) = \sqrt{2} \, e^{-\frac{x}{2}} x^{\frac{\alpha}{2}}
                              \sqrt{\frac{ n!}{\Gamma(n+ \alpha +1)}} L_n^{(\alpha)}(x).
\]
Again, these polynomials are not really the normalized Laguerre polynomials. A factor of $\sqrt{2}$ is added in order to obtain
\[
  \int_0^\infty \tilde{L}_m^{(\alpha)}(x) \, \tilde{L}_n^{(\alpha)}(x) dx 
  = 2 \, \delta_{mn}.
\]
The three term recurrence relation of these functions can be written as
\begin{equation} \label{rec_Laguerre}
    (2n-x+\alpha+1) \, \tilde{L}_n^{(\alpha)}(x) 
  = \sqrt{n(n+\alpha)} \, \tilde{L}_{n-1}^{(\alpha)}(x) + \sqrt{(n+1)(n+\alpha+1)} \, \tilde{L}_{n+1}^{(\alpha)}(x)
\end{equation}
and will be useful in the context of the free particle.

\subsubsection{Generalized Hermite polynomials}
The generalized Hermite polynomials $Q_n^{(a)}(x)$ form the third family of orthogonal polynomials that is of importance to us. They are not listed in the Askey-scheme, but they are closely related to the Laguerre polynomials $L_n^{(\alpha)}(x)$, defined previously by equation \eqref{laguerre-pol}. For positive $n$, the definition of the generalized Hermite polynomials~\cite{Chihara-78} is given by
\[
  Q_{2n}^{(a)}(x)   = (-1)^n L_n^{(a-1)}(x^2), \qquad
  Q_{2n+1}^{(a)}(x) = (-1)^n x L_n^{(a)}(x^2).
\]
The classical Hermite polynomials are found by choosing $a=1/2$. As mentioned in the introduction, this is an important value for $a$. We will show how this indeed corresponds to the case of canonical quantization. At this point however, the link with any type of quantization is not obvious at all, so let us not jump too far ahead. 

Of course, there is also a set of orthogonality relations and a pair of recurrence relations for the generalized Hermite polynomials. Instead of giving them here, we immediately define the normalized version of these polynomials:
\begin{eqnarray*}
  \tilde{Q}_{2n}^{(a)}(x)   & = & |x|^{a-\frac{1}{2}} \, e^{-\frac{x^2}{2}}
                                  \sqrt{\frac{n!}{\Gamma(n+a)}} \, Q_{2n}^{(a)}(x), \\
  \tilde{Q}_{2n+1}^{(a)}(x) & = & |x|^{a-\frac{1}{2}} \, e^{-\frac{x^2}{2}}
                                  \sqrt{\frac{n!}{\Gamma(n+a+1)}} \, Q_{2n+1}^{(a)}(x). 
\end{eqnarray*}
The functions $\tilde{Q}_n^{(a)}(x)$ satisfy the following orthogonality relation:
\[
  \int_{-\infty}^{+\infty} \tilde{Q}_m^{(a)}(x) \tilde{Q}_n^{(a)}(x) dx = \delta_{mn}.
\]
The pair of recurrence relations corresponding to the normalized generalized Hermite polynomials $\tilde{Q}_n^{(a)}(x)$ is given by
\begin{equation} \label{rec_Hermite}
  \begin{aligned}[c]
    x \, \tilde{Q}_{2n}^{(a)}(x)   & =   \sqrt{n} \, \tilde{Q}_{2n-1}^{(a)}(x) 
                                       + \sqrt{n+a} \, \tilde{Q}_{2n+1}^{(a)}(x), \\
    x \, \tilde{Q}_{2n+1}^{(a)}(x) & =   \sqrt{n+a} \, \tilde{Q}_{2n}^{(a)}(x) 
                                       + \sqrt{n+1} \, \tilde{Q}_{2n+2}^{(a)}(x).
  \end{aligned}
\end{equation}
Again, this recurrence relation is the primary result of this section because the same formulas will characterize different objects in section \ref{subsec_spectrum_x}. We will then be able to identify these objects with the normalized generalized Hermite polynomials $\tilde{Q}_n^{(a)}(x)$.

\subsection{Spectrum of the operators $\hat{H}_b$, $\hat{x}$ and $\hat{p}$} \label{sec-spectrum}
The first main goal of this paper is to determine the spectrum of the essential operators in the $\mathfrak{osp}(1|2)$ representation space $V = V_0 \oplus V_1$. The operators $\hat{x}$ and $\hat{p}$ have comparable expressions in terms of $\mathfrak{osp}(1|2)$ generators and can thus be handled in a similar way. The spectrum of $\hat{H}_b$ on the other hand is an entirely different issue. Since $\hat{H}_b = i(e+f)$ is an even operator, using the terminology of section \ref{sec_representations}, the spectrum of $\hat{H}_b$ can be considered in both subspaces $V_0$ or $V_1$ separately. The operator $\hat{x}$ is clearly an odd operator, so in this case one cannot look at both subspaces individually. \\
Determining the spectrum of the operators $\hat{H}_b$ and $\hat{x}$ is done using the same method. One starts by defining a formal eigenvector of $\hat{H}_b$ and $\hat{x}$, with respective eigenvalues $E$ and $x$. These eigenvectors are unknown linear combinations of the basis vectors $e_n$ of $V$. A three term recurrence relation for the coefficients can then be calculated, and this will enable us to identify these coefficients with orthogonal polynomials. From the spectral theorem for unbounded self-adjoint operators~\cite{Dunford-63, Berezanskii-68} one can then derive that the spectrum of the operators is equal to the support of the weight function of the corresponding orthogonal polynomials. It is then easy to find the spectrum of $\hat{H}_b$ and $\hat{x}$. \\

\subsubsection{Spectrum of $\hat{H}_b = i(e+f)$} \label{subsec_spectrum_H}
For the spectrum of the Hamiltonian $\hat{H}_b$, we start by defining a formal eigenvector $u_0(E)$, for the eigenvalue $E$:
\begin{equation} \label{eigenvector_H}
  u_0(E) = \sum_{n=0}^\infty \alpha_{2n}(E) e_{2n},
\end{equation}
where $\alpha_{2n}(E)$ are some unknown functions to be determined. We have already argued that the spectrum of $\hat{H}_b$ can by considered in both subspaces $V_0$ and $V_1$ separately. The vector $u_0(E)$ is an element of the even subspace $V_0$ and the action of $\hat{H}_b$ on this vector results in another element of $V_0$, as one sees from
\[
  \hat{H}_b \, u_0(E) = \sum_{n=0}^\infty \alpha_{2n}(E) 
                        \left( i \sqrt{(n+1)(n+a)} \, e_{2n+2} - i \sqrt{n(n+a-1)} \, e_{2n-2} \right).
\]
This has been established by rewriting $\hat{H}_b=i(e+f)$ with the help of \eqref{def_hef}, which makes it possible to use equations \eqref{action_b+b-} in order to determine the action of $\hat{H}_b$ on $u_0(E)$. We can try to find the unknown coefficients $\alpha_{2n}(E)$ by collecting the coefficients of $e_{2n}$ in the expression $\hat{H}_b u_0(E) = E u_0(E)$. We end up with the following recurrence relation:
\[
  E \, \alpha_{2n} = i \sqrt{n(n+a-1)} \, \alpha_{2n-2} - i \sqrt{(n+1)(n+a)} \, \alpha_{2n+2}.
\]
We then define $\tilde{\alpha}_{2n}(E) = (-i)^n \alpha_{2n}(E)$, for which the recurrence relation reads
\[
  E \, \tilde{\alpha}_{2n} =   \sqrt{n(n+a-1)} \, \tilde{\alpha}_{2n-2} 
                             + \sqrt{(n+1)(n+a)} \, \tilde{\alpha}_{2n+2}.
\]
If we compare this with equation \eqref{rec_normal_MP}, we see that both formulas are almost the same. The unknown objects $\tilde{\alpha}_{2n}(E)$ can therefore be identified with the normalized Meixner-Pollaczek polynomials. We have
\[
  \tilde{\alpha}_{2n}(E) = \tilde{P}_n( \frac{E}{2} ).
\]
We have found the coefficients $\alpha_{2n}(E)$ which determine the formal eigenvector \eqref{eigenvector_H}. They can be written as 
\begin{equation} \label{alpha_2n}
  \alpha_{2n}(E) = (-1)^n \sqrt{ \frac{(a)_n}{n!} } \,
                   {\cal{A}}_0(E) \,
                   _2F_1 \left( \left. \atop{-n, \frac{a+iE}{2}}{a} \right| 2 \right),
\end{equation}
with
\[
  {\cal{A}}_0(E) = \frac{ \abs{ \Gamma \left( \frac{a+iE}{2} \right) } }
                { \sqrt{2^{2-a} \, \pi \, \Gamma(a)} } \, .
\]
So far, we have used the language of formal eigenvectors, but the result can also be formulated in a different mathematical way. Herein, $\hat{H}_b$ acts as a Jacobi operator on the basis $\{ e_{2n} | n \in \mathbb{Z}_+ \}$ of $V_0 \cong \ell^2(\mathbb{Z}_+)$. Next, one defines a map $\Lambda$ from $\ell^2(\mathbb{Z}_+)$ to square integrable functions $L^2(\mathbb{R}, w(E)dE)$, where $w(E) = |\Gamma(\frac{a}{2} + iE)/2|$ is the weight function corresponding to Meixner-Pollaczek polynomials, by
\[
  (\Lambda e_{2n}) (E) = (-i)^n \sqrt{\frac{2^a \, n!}{\pi \, \Gamma(n+a)}} \, P_n(\frac{E}{2}).
\]
Then $\Lambda \circ \hat{H}_b = M_E \circ \Lambda$, i.e. $\Lambda$ intertwines $\hat{H}_b$ acting in $\ell^2(\mathbb{Z}_+)$ with the multiplication operator $M_E$ on $L^2(\mathbb{R}, w(E)dE)$, see~\cite[Prop. 3.1]{KoelinkVdJ-98}. Since we are really dealing with aspects of Wigner quantization in this paper, we shall not overload it with stricter terminology and just use the for physicists more familiar language of formal eigenvectors, delta-functions, etc.

We can now rely on the spectral theorem to find the spectrum of $\hat{H}_b$ in $V_0$: it is equal to the support of the weight function of the Meixner-Pollaczek polynomials. This weight function, accompanying the polynomials under the integral in equation \eqref{orthog_MP}, has the real axis as its support. As a result, the spectrum of $\hat{H}_b$ in $V_0$ is $\mathbb{R}$. \\
The same technique will give us the spectrum of $\hat{H}_b$ in $V_1$. We start by defining a formal eigenvector $u_1(E)$, determined by the coefficients $\alpha_{2n+1}(E)$. These are then calculated in the same way as for the coefficients $\alpha_{2n}(E)$. The analysis and results can be copied exactly, but the parameter $a$ has to be changed into $a+1$. Thus we have
\begin{equation} \label{alpha_2n+1}
  \alpha_{2n+1}(E) = (-1)^n \sqrt{ \frac{(a+1)_n}{n!} } \,
                     {\cal{A}}_1(E) \,
                     _2F_1 \left( \left. \atop{-n, \frac{a+1+iE}{2}}{a+1} \right| 2 \right),
\end{equation}
with
\[
  {\cal{A}}_1(E) = \frac{ \abs{ \Gamma \left( \frac{a+1+iE}{2} \right) } }
                { \sqrt{2^{1-a} \, \pi \, \Gamma(a+1)} } \, .
\]
The reason for this is that in $V_0$ the lowest weight vector is $e_0$ and $h \, e_0 = a \, e_0$ characterizes this representation. In $V_1$ on the other hand, $e_1$ is the lowest weight vector, and the corresponding lowest weight is $a+1$. The conclusion in this case is similar: the spectrum of $\hat{H}_b$ in $V_1$ is $\mathbb{R}$. \\
Combining these results, we have:

%
\begin{theorem} \label{th_specH}
  In the $\mathfrak{osp}(1|2)$ representation space $V$, the Hamiltonian $\hat{H}_b$ has formal eigenvectors
\[
  u_0(E) = \sum_{n=0}^\infty \alpha_{2n}(E) e_{2n},
\]
with coefficients determined by equation \eqref{alpha_2n}, and 
\[
  u_1(E) = \sum_{n=0}^\infty \alpha_{2n+1}(E) e_{2n+1},
\]
with coefficients \eqref{alpha_2n+1}. The spectrum of $\hat{H}_b$ in $V$ is $\mathbb{R}$ with multiplicity 2.
\end{theorem}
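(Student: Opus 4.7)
The plan is to exploit the parity decomposition $V = V_0 \oplus V_1$: since $\hat{H}_b = i(e+f)$ involves only even generators of $\mathfrak{osp}(1|2)$, it preserves each subspace, so I can diagonalize it on $V_0$ and $V_1$ separately and combine the results. Within each subspace, the strategy is to write a formal eigenvector as an infinite linear combination of the standard basis vectors, reduce the eigenvalue equation to a three-term recurrence for the coefficients, identify that recurrence with one from Section \ref{sec-polynomials}, and then invoke the spectral theorem for Jacobi operators to read the spectrum off the support of the associated orthogonality weight.

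Concretely, on $V_0$ I would set $u_0(E) = \sum_{n \geq 0} \alpha_{2n}(E) e_{2n}$, rewrite $\hat{H}_b = \frac{i}{2}((b^+)^2 - (b^-)^2)$, and apply \eqref{action_b+b-} twice to compute $\hat{H}_b u_0(E)$. Collecting the coefficient of $e_{2n}$ in $\hat{H}_b u_0(E) = E u_0(E)$ yields a recurrence with imaginary coefficients, which is cleaned up by the gauge change $\tilde{\alpha}_{2n}(E) = (-i)^n \alpha_{2n}(E)$. The resulting real recurrence matches \eqref{rec_normal_MP} at argument $E/2$ exactly, so $\tilde{\alpha}_{2n}(E) = \tilde{P}_n(E/2)$; reinstating the normalization factor from the definition of $\tilde{P}_n$ produces \eqref{alpha_2n}. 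Since the Meixner-Pollaczek weight $|\Gamma(a/2 + iE)|^2$ entering \eqref{orthog_MP} is strictly positive and supported on all of $\mathbb{R}$, the spectral theorem for the associated Jacobi operator yields $\mathbb{R}$ as the spectrum of $\hat{H}_b$ restricted to $V_0$. The argument on $V_1$ is identical up to one parameter shift: the $\mathfrak{su}(1,1)$ content of $V_1$ is a lowest weight representation with lowest weight $a+1$, as $h e_1 = (a+1) e_1$ follows directly from \eqref{def_hef} and \eqref{action_b+b-}, so the recurrence for $\alpha_{2n+1}(E)$ is exactly the one for $\alpha_{2n}(E)$ with $a$ replaced by $a+1$, giving \eqref{alpha_2n+1} and again the spectrum $\mathbb{R}$. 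Combining the two subspaces, every real $E$ is attained by the two orthogonal formal eigenvectors $u_0(E)$ and $u_1(E)$, so the spectrum on $V$ is $\mathbb{R}$ with multiplicity $2$.

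The main subtlety is not the recurrence algebra but the passage from formal eigenvectors to a genuine spectral statement: the vectors $u_0(E)$ and $u_1(E)$ are not in $\ell^2(\mathbb{Z}_+)$, so one cannot simply count eigenvectors. The clean route is to treat $\hat{H}_b\vert_{V_0}$ and $\hat{H}_b\vert_{V_1}$ as unbounded self-adjoint Jacobi operators and invoke the spectral theorem of \cite{Dunford-63, Berezanskii-68}, implemented concretely via the intertwining map $\Lambda$ from $\ell^2(\mathbb{Z}_+)$ to $L^2(\mathbb{R}, w(E)dE)$. Self-adjointness itself reduces to the Jacobi matrix with entries $\sqrt{n(n+a-1)}$ and $\sqrt{(n+1)(n+a)}$ being in the limit-point case, which is the only nontrivial analytic point I anticipate.
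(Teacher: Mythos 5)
Your proposal is correct and follows essentially the same route as the paper: parity decomposition of $V$, a three-term recurrence for the coefficients of a formal eigenvector, the gauge change $\tilde{\alpha}_{2n}(E)=(-i)^n\alpha_{2n}(E)$ to match the normalized Meixner--Pollaczek recurrence \eqref{rec_normal_MP} at argument $E/2$, the shift $a\mapsto a+1$ on $V_1$, and the spectral theorem reading the spectrum off the support of the weight in \eqref{orthog_MP}. Your closing remark on self-adjointness and the limit-point case is a careful addition that the paper passes over by simply citing the spectral-theorem references and the intertwining map $\Lambda$.
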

It is important to note that the coefficients $\alpha_n(E)$ have been chosen in such a way that the vectors $u_0(E)$ and $u_1(E)$ are delta function normalized vectors. To support this statement, we first observe that
\begin{equation} \label{inprod_alphas_even}
    \inprod{ \alpha_{2m}(E), \alpha_{2n}(E) } 
  = \int_{-\infty}^{+\infty} \alpha_{2m}^*(E) \, \alpha_{2n}(E) \, dE 
  = \delta_{mn},
\end{equation}
which is a direct consequence of equation \eqref{orthog_rel_MP}. We then multiply both sides of equation \eqref{eigenvector_H} by $\alpha_{2m}^*(E)$ and integrate. With the help of equation \eqref{inprod_alphas_even} this yields
\[
    \int_{- \infty}^{+ \infty} u_0(E) \, \alpha_{2m}^*(E) \, dE
  = \sum_{n=0}^\infty e_{2n} \int_{- \infty}^{+ \infty} \alpha_{2n}(E) \, \alpha_{2m}^*(E) \, dE
  = e_{2m}.
\]
Now we have an integral expression for $e_{2m}$, which is used to obtain
\begin{align*}
  u_0(E') & = \sum_{m=0}^\infty a_{2m}(E') \int_{- \infty}^{+ \infty} u_0(E) \, \alpha_{2m}^*(E) \, dE \\
          & = \int_{- \infty}^{+ \infty} \left( \sum_{m=0}^\infty \alpha_{2m}(E') \, \alpha_{2m}^*(E) \right) u_0(E) \, dE.
\end{align*}
The definition of the Dirac delta function implies that
\[
  \sum_{m=0}^\infty \alpha_{2m}^*(E) \, \alpha_{2m}(E') = \delta(E-E').
\]
From this, we can draw the immediate conclusion that $u_0(E)$ is a delta function normalized vector. Indeed:
\[
  \inprod{u_0(E), u_0(E')} = \sum_{n=0}^\infty \alpha_{2n}^*(E) \, \alpha_{2n}(E') = \delta(E-E').
\]
In an analogous way one can prove the same property for the vector $u_1(E)$, since equation \eqref{inprod_alphas_even} is also valid for odd indices. More generally, when a vector is decomposed in a certain orthonormal basis (in this case the $\mathfrak{osp}(1|2)$ representation space basis $e_n$), and when the coordinates (the coefficients $\alpha_n(E)$) are also orthonormal, then the vector is normalized with respect to the delta function. We will encounter this scenario for the eigenvectors of $\hat{x}$, $\hat{p}$ and $\hat{H}_f$.

\subsubsection{Spectrum of $\hat{x} = \frac{(b^+ + b^-)}{\sqrt{2}}$} \label{subsec_spectrum_x}
The spectrum of the operator $\hat{x}$ will be related to generalized Hermite polynomials, and the method of reaching this relation is quite similar as before. The biggest difference is that we now have to take the entire representation space $V$ into account. Thus, the formal eigenvector $v(x)$ of $\hat{x}$ is defined by
\begin{equation} \label{eigenvector_x}
  v(x) = \sum_{n=0}^\infty \beta_n(x) e_n,
\end{equation}
satisfying the relation $ \hat{x} \, v(x) = x \, v(x)$. The left hand side of this eigenvalue equation can be written as
\[
          \sum_{n=0}^\infty \beta_{2n}   (\sqrt{n} \, e_{2n-1} + \sqrt{n+a} \, e_{2n+1})
  \, + \, \sum_{n=0}^\infty \beta_{2n+1} (\sqrt{n+a} \, e_{2n} + \sqrt{n+1} \, e_{2n+2}),
\]
using $\hat{x} = \frac{(b^+ + b^-)}{\sqrt{2}}$ together with equations \eqref{action_b+b-}. We can then compare the coefficients of $e_n$ on both sides of the equation, which will result in a pair of recurrence relations. We have
\begin{eqnarray*}
  x \, \beta_{2n}   & = &   \sqrt{n} \, \beta_{2n-1} 
                          + \sqrt{n+a} \, \beta_{2n+1}, \\
  x \, \beta_{2n+1} & = &   \sqrt{n+a} \, \beta_{2n} 
                          + \sqrt{n+1} \, \beta_{2n+2}.
\end{eqnarray*}
It should come as no surprise that we recognize the pair of recurrence relations \eqref{rec_Hermite} of the normalized generalized Hermite polynomials. Therefore, we can identify the unknown coefficients $\beta_n(x)$ with those polynomials:
\[
  \beta_n(x) = \tilde{Q}_n^{(a)}(x).
\]
The explicit expression for the $\beta_{2n}(x)$ is given by
\begin{equation} \label{beta_2n}
  \beta_{2n}(x) = (-1)^n \sqrt{ \frac{(a)_n}{n!} } \,
                   {\cal{B}}_0(x) \, \,
                   _1F_1 \left( \left. \atop{-n}{a} \right| x^2 \right),
\end{equation}
with
\[
  {\cal{B}}_0(x) = \frac{\abs{x}^{a-\frac{1}{2}} e^{-\frac{x^2}{2}}}{\sqrt{\Gamma(a)}}.
\]
A similar expression describes the $\beta$'s with an odd index:
\begin{equation} \label{beta_2n+1}
  \beta_{2n+1}(x) = (-1)^n \sqrt{ \frac{(a+1)_n}{n!} } \,
                     {\cal{B}}_1(x) \,
                     _1F_1 \left( \left. \atop{-n}{a+1} \right| x^2 \right),
\end{equation}
with
\[
  {\cal{B}}_1(x) = \frac{x \abs{x}^{a-\frac{1}{2}} e^{-\frac{x^2}{2}}}{\sqrt{\Gamma(a+1)}}.
\]
The same arguments as for the spectrum of $\hat{H}_b$ allow us to determine the spectrum of $\hat{x}$. We have

%
\begin{theorem}
  In the $\mathfrak{osp}(1|2)$ representation space $V$, the position operator $\hat{x}$ has formal eigenvectors
\[
  v(x) = \sum_{n=0}^\infty \beta_n(x) e_n,
\]
with coefficients determined by equations \eqref{beta_2n} and \eqref{beta_2n+1}. The spectrum of $\hat{x}$ in $V$ is equal to $\mathbb{R}$, which is the support of the weight function of the generalized Hermite polynomials.
\end{theorem}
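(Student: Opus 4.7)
The plan is to mirror the strategy used for $\hat{H}_b$ in Section \ref{subsec_spectrum_H}, but accounting for the fact that $\hat{x} = (b^+ + b^-)/\sqrt{2}$ is an odd operator: it intertwines $V_0$ and $V_1$, so we must look for a formal eigenvector $v(x) = \sum_{n \geq 0} \beta_n(x) e_n$ supported on all of $V$ rather than on each parity subspace separately.

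Inserting this ansatz into the eigenvalue equation $\hat{x}\,v(x) = x\,v(x)$ and using the explicit actions \eqref{action_b+b-} of $b^\pm$, I would collect the coefficients of each basis vector $e_n$ on both sides. Because $\hat{x}$ shifts parity, the resulting identities naturally split into two coupled families: one expressing $x\beta_{2n}$ as a combination of $\beta_{2n-1}$ and $\beta_{2n+1}$, and one expressing $x\beta_{2n+1}$ as a combination of $\beta_{2n}$ and $\beta_{2n+2}$. This pair matches the recurrences \eqref{rec_Hermite} for the normalized generalized Hermite polynomials $\tilde{Q}_n^{(a)}(x)$ exactly, so I would identify $\beta_n(x) = \tilde{Q}_n^{(a)}(x)$; substituting the explicit forms of $\tilde{Q}_{2n}^{(a)}$ and $\tilde{Q}_{2n+1}^{(a)}$ from Section \ref{sec-polynomials} then yields the formulas \eqref{beta_2n} and \eqref{beta_2n+1}.

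For the spectral statement I would invoke the same intertwining argument as in the $\hat{H}_b$ case: in the basis $\{e_n\}$ of $\ell^2(\mathbb{Z}_+)$ the operator $\hat{x}$ acts as a Jacobi operator whose coefficients are precisely those of the three-term recurrence for $\tilde{Q}_n^{(a)}$, so by the spectral theorem for unbounded self-adjoint operators the spectrum of $\hat{x}$ coincides with the support of the orthogonality measure of the generalized Hermite polynomials, which is all of $\mathbb{R}$.

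The main obstacle I anticipate is not the algebra itself but bookkeeping: since $\hat{x}$ does not preserve parity, one cannot reduce the analysis to two independent Jacobi matrices on $V_0$ and $V_1$, and must treat a single tridiagonal system on $V$ whose off-diagonal entries alternate between the patterns $\sqrt{n}$, $\sqrt{n+a}$, $\sqrt{n+1}$, $\sqrt{n+a}$. A secondary check, analogous to the remark after Theorem \ref{th_specH}, is that the prefactors ${\cal B}_0(x)$ and ${\cal B}_1(x)$ are tuned so that $v(x)$ is delta-function normalized; this will follow immediately from the orthonormality $\int \tilde{Q}_m^{(a)}(x)\,\tilde{Q}_n^{(a)}(x)\,dx = \delta_{mn}$ together with completeness of the basis $\{e_n\}$.
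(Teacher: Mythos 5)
Your proposal follows the paper's argument essentially verbatim: the same full-space ansatz $v(x)=\sum_n\beta_n(x)e_n$, the same pair of coupled recurrences obtained from \eqref{action_b+b-}, the same identification $\beta_n(x)=\tilde{Q}_n^{(a)}(x)$ via \eqref{rec_Hermite}, and the same appeal to the spectral theorem identifying the spectrum with the support of the generalized Hermite orthogonality measure. The closing remark on delta-function normalization likewise matches the paper's discussion following Theorem \ref{th_specH}, so there is nothing to correct.
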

Note that the major difference with Theorem \ref{th_specH} lies in the fact that the spectrum of the position operator does not have a double multiplicity. Each eigenvalue $x$ belongs to exactly one eigenvector $v(x)$.

\subsubsection{Spectrum of $\hat{p} = \frac{i(b^+ - b^-)}{\sqrt{2}}$} \label{subsec_spectrum_p}
Determining the spectrum of $\hat{p}$ is now just a formality because of the similar expressions for $\hat{x}$ and $\hat{p}$. In fact, if we denote the formal eigenvector of $\hat{p}$ for the eigenvector $p$ by
\[
  w(p) = \sum_{n=0}^\infty \gamma_n(p) e_n,
\]
then it is not hard to see that the coefficients $\gamma_n(p)$ can be written as
\[
  \gamma_n(p) = i^n \beta_n(p),
\]
where the $\beta_n(p)$ are given by equations \eqref{beta_2n} and \eqref{beta_2n+1}. Thus, these coefficients are again generalized Hermite polynomials and we have
%
\begin{theorem}
  In the $\mathfrak{osp}(1|2)$ representation space $V$, the momentum operator $\hat{p}$ has formal eigenvectors
\[
  w(p) = \sum_{n=0}^\infty i^n \beta_n(p) e_n,
\]
with coefficients determined by equations \eqref{beta_2n} and \eqref{beta_2n+1}. Just like $\hat{x}$, $\hat{p}$ has a spectrum in $V$ that is equal to $\mathbb{R}$.
\end{theorem}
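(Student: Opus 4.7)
The plan is to mirror the argument used for $\hat{x}$ in Section \ref{subsec_spectrum_x}, exploiting the fact that $\hat{p}$ differs from $\hat{x}$ only by replacing $b^+ + b^-$ with $i(b^+ - b^-)$. Concretely, I would start from the formal ansatz $w(p) = \sum_{n=0}^\infty \gamma_n(p) e_n$ and apply $\hat{p}$ to it using the actions \eqref{action_b+b-} of $b^\pm$. Collecting the coefficients of $e_{2n}$ and $e_{2n+1}$ in the eigenvalue equation $\hat{p} \, w(p) = p \, w(p)$, one obtains a pair of recurrences of the schematic form
\[
  -ip \, \gamma_{2n}   = \sqrt{n}   \, \gamma_{2n-1} - \sqrt{n+a} \, \gamma_{2n+1},\qquad
  -ip \, \gamma_{2n+1} = \sqrt{n+a} \, \gamma_{2n}   - \sqrt{n+1} \, \gamma_{2n+2},
\]
which differ from the recurrences for $\hat{x}$ only by signs and by a factor of $i$.

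Next I would make the substitution $\gamma_n(p) = i^n \tilde{\gamma}_n(p)$. The extra factors of $i$ on the right-hand side combine with the sign changes so that the system reduces exactly to the recurrences \eqref{rec_Hermite} satisfied by the normalized generalized Hermite polynomials, with $p$ playing the role of $x$. This lets me identify $\tilde{\gamma}_n(p) = \beta_n(p)$, where $\beta_n$ is given by \eqref{beta_2n} and \eqref{beta_2n+1}, and therefore $\gamma_n(p) = i^n \beta_n(p)$, which is exactly the stated form of the eigenvector.

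Finally, by the spectral theorem for unbounded self-adjoint Jacobi operators (as invoked throughout Section \ref{sec-spectrum}), the spectrum of $\hat{p}$ in $V$ coincides with the support of the orthogonality measure of the generalized Hermite polynomials. Since that support is $\mathbb{R}$, the spectrum of $\hat{p}$ is all of $\mathbb{R}$, and the delta-function normalization of $w(p)$ follows from the same general principle used at the end of Section \ref{subsec_spectrum_H}: the coefficients $\gamma_n(p)$ form an orthonormal family, and the phases $i^n$ are unimodular so they do not affect the orthogonality integral $\int \gamma_m^*(p)\gamma_n(p)\,dp = \delta_{mn}$.

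Honestly, no step here is a genuine obstacle: the whole point of the theorem is that $\hat{p}$ is related to $\hat{x}$ by an internal $\mathfrak{osp}(1|2)$ symmetry that sends $b^\pm \mapsto \mp i\, b^\pm$, so the work is almost entirely bookkeeping of signs and factors of $i$. The only minor subtlety worth double-checking is that $\hat{p}$ remains self-adjoint under the $\ast$-structure $(b^\pm)^\ast = b^\mp$, which is immediate from the expression $\hat{p} = i(b^+ - b^-)/\sqrt{2}$, ensuring that the spectral theorem applies in the same form as for $\hat{x}$.
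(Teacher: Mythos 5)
Your proposal is correct and follows essentially the same route as the paper: the paper treats this as "just a formality" given the $\hat{x}$ computation and simply asserts that $\gamma_n(p) = i^n\beta_n(p)$, while you fill in exactly the intended details (the sign-twisted recurrences, the substitution $\gamma_n = i^n\tilde{\gamma}_n$ reducing them to \eqref{rec_Hermite}, and the unimodularity of $i^n$ preserving orthonormality and hence the delta-function normalization). Your recurrences and the identification with the generalized Hermite polynomials check out.
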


As pointed out in section \ref{subsec_spectrum_H}, the eigenvectors of $\hat{x}$ and $\hat{p}$ are normalized with respect to the delta function. Thus we have
\[
  \inprod{v(x), v(x')} = \delta(x-x')
\]
and
\[
  \inprod{w(p), w(p')} = \delta(p-p'),
\]
ensuing from the fact that $\beta_n(x)$ and $\gamma_n(p)$ are orthonormal functions.

\subsection{Generalized wave functions} \label{sec-generalized-wave-function}
In the previous section we have constructed formal eigenvectors for all relevant operators $\hat{H}_b$, $\hat{x}$ and $\hat{p}$. Finding the wave functions corresponding to the physical states $u_0(E)$, $u_1(E)$, $v(x)$ and $w(p)$ requires the mutual inner products between these vectors. Therefore, we will compute these inner products before analyzing the wave functions explicitly.

\subsubsection{Mutual inner products}
This section is dedicated to the purely mathematical calculation of all inner products between the vectors $u_0(E)$, $u_1(E)$, $v(x)$ and $w(p)$. Two existing formulas will be crucial in these calculations. One of them is found in~\cite[Proposition 2]{VdJ-Jagannathan-98} and states that
\begin{equation} \label{eq-Jagannathan}
\begin{array}{rcl}
  &   & \hspace{-2.0cm}
        \displaystyle{
        \sum_{n=0}^\infty \frac{(a)_n}{n!} \,
                          \Fser{2}{1}{-n, b}{a}{y} \Fser{1}{1}{-n}{a}{x} t^n
        }                                                                    \\
        \hspace{1.0cm}
  & = & \displaystyle{
        (1-t)^{b-a} (1-t+yt)^{-b} \, e^{\frac{xt}{t-1}} \, 
        \Fser{1}{1}{b}{a}{\frac{xyt}{(1-t)(1-t+yt)}},
        }
\end{array}
\end{equation}
where $|t|<1$. The other formula is the following approximation~\cite[equation (7.9)]{Temme-96}:
\begin{equation} \label{eq-Temme}
    \lim_{z \to \infty} \Fser{1}{1}{c}{a}{z} 
  = \lim_{z \to \infty} \frac{\Gamma(a)}{\Gamma(c)} \, e^z \, z^{c-a} 
    \sum_{n=0}^\infty \frac{(a-c)_n \, (1-c)_n}{n!} \, z^{-n}.
\end{equation}
These two equations will appear useful in the following.

Let us first examine the inner products between the eigenvectors of $\hat{x}$ and $\hat{H}_b$. We will only calculate
\[
  \inprod{ v(x), u_0(E) } = \sum_{n=0}^\infty \alpha_{2n}(E) \beta_{2n}^*(x)
\]
explicitly, since $\inprod{ v(x), u_1(E) }$ is found in a highly similar way. The functions $\beta_n(x)$ are real, so the complex conjugation can be dropped. Using equations \eqref{alpha_2n} and \eqref{beta_2n}, we can write $\inprod{ v(x), u_0(E) }$ as
\[
  {\cal{A}}_0(E) \, {\cal{B}}_0(x) 
  \sum_{n=0}^\infty \frac{(a)_n}{n!} \, 
                    \Fser{2}{1}{-n, \frac{a+iE}{2}}{a}{2}
                    \Fser{1}{1}{-n}{a}{x^2}.
\]
In order to determine this summation, we need to add a factor $t^n$ to the $n$th summand and then take the limit for $t \to 1$, with $t<1$. Using equation \eqref{eq-Jagannathan}, we rewrite
$\sum_{n=0}^\infty \alpha_{2n}(E) \beta_{2n}(x) t^n$ as
\begin{equation} \label{partial_result_inprod}
  {\cal{A}}_0(E) \, {\cal{B}}_0(x) 
  (1-t)^{\frac{-a+iE}{2}} \, (1+t)^{\frac{-a-iE}{2}} \, e^{\frac{x^2 t}{t-1}} \, 
  \Fser{1}{1}{\frac{a+iE}{2}}{a}{\frac{2x^2 t}{(1-t)(1+t)}}.
\end{equation}
The limit for $t \to 1$ is found with the help of equation \eqref{eq-Temme}. After simplification, we have
\begin{align}
  \inprod{ v(x), u_0(E) } & = {\cal{A}}_0(E) \, {\cal{B}}_0(x) \, \frac{\Gamma(a)}{\Gamma(\frac{a+iE}{2})} 
                                                     \, \lim_{t \to 1} \, e^\frac{x^2 t}{1+t} 
                                                     \, (\abs{x} \sqrt{2t})^{-a+iE} (1+t)^{-iE}      \nonumber            \\
                                & = \frac{ \abs{\Gamma(\frac{a+iE}{2})} }
                                         { \Gamma(\frac{a+iE}{2}) } \, 
                                    \frac{ \abs{x}^{iE-\frac{1}{2}} }
                                         { \sqrt{\pi \, 2^{iE+2}} }                                  \label{vx_v0E} \\
                                & = {\cal{C}}_0(E) \, \frac{|x|^{iE-\frac{1}{2}}}{2 \sqrt{\pi}},             \nonumber
\end{align}
where $|{\cal{C}}_0(E)|^2 = 1$.
%
%
A similar derivation provides us with the inner product $\inprod{v(x), u_1(E)}$. It is
\begin{align}
  \inprod{ v(x), u_1(E) } & = \frac{ \abs{\Gamma(\frac{a+1+iE}{2})} }
                                         { \Gamma(\frac{a+1+iE}{2}) } \, 
                                    \frac{ x \abs{x}^{iE-\frac{3}{2}} }
                                         { \sqrt{\pi \, 2^{iE+2}} }         \label{vx_v1E}           \\
                                & = {\cal{C}}_1(E) \, \frac{x|x|^{iE-\frac{3}{2}}}{2 \sqrt{\pi}},            \nonumber
\end{align}
where $|{\cal{C}}_1(E)|^2 = 1$.
%
%

We can deduce the inner products between the formal eigenvectors of $\hat{p}$ and $\hat{H}_b$ using \eqref{vx_v0E} and \eqref{vx_v1E}. First observe that
\[
  \Fser{2}{1}{a, b}{c}{z} = (1-z)^{-a} \Fser{2}{1}{a, c-b}{c}{\frac{z}{z-1}},
\]
together with ${\cal{A}}_0(E) = {\cal{A}}_0(-E)$, implies that
\[
  (-1)^n \alpha_{2n}(E) = \alpha_{2n}(-E).
\]
A similar identity holds for odd indices of the coefficients $\alpha_n(E)$: $(-1)^n \alpha_{2n+1}(E) = \alpha_{2n+1}(-E)$. We have used a standard hypergeometric identity that can be found in every book with an introduction to hypergeometric functions. We mention~\cite{Koekoek2} as an example. It is then possible to see that
\begin{equation} \label{wp-u0E}
  \inprod{w(p), u_0(E)} = \inprod{v(p), u_0(-E)}
\end{equation}
and
\begin{equation} \label{wp-u1E}
  \inprod{w(p), u_1(E)} = -i \inprod{v(p), u_1(-E)},
\end{equation}
where the inner products with a vector $v$ can be found in equations \eqref{vx_v0E} and \eqref{vx_v1E}.

What remains to be calculated is the inner product between the eigenvectors of the position and momentum operators $\hat{x}$ and $\hat{p}$. This inner product can be written as:
\[
  \inprod{v(x), w(p)} = \sum_{n=0}^\infty i^n \beta_n(x) \beta_n(p),
\]
which splits up as a sum over even indices and a sum over odd indices. Both series can be computed in an analogous fashion. For the even part we have
\begin{equation} \label{even-part-vx-vp}
    \sum_{n=0}^\infty (-1)^n \beta_{2n}(x) \beta_{2n}(p)
  = {\cal{B}}_0(x) {\cal{B}}_0(p) \sum_{n=0}^\infty (-1)^n \frac{(a)_n}{n!} \, \, \Fser{1}{1}{-n}{a}{x^2} \Fser{1}{1}{-n}{a}{p^2}.
\end{equation}
If we manage to write one of the $_1 F_1$-series as a $_2 F_1$ Gauss hypergeometric function, we can rely on equation \eqref{eq-Jagannathan} to simplify this equation. We shall make use of the following identity (see e.g.~\cite{Koekoek}):
\begin{equation} \label{lim-2F1}
  \lim_{b \to \infty} \, \Fser{2}{1}{a, b}{c}{\frac{z}{b}} = \, \Fser{1}{1}{a}{c}{z}.
\end{equation}
Applying this to $\, \Fser{1}{1}{-n}{a}{x^2}$, we see that the summation in the right-hand side of equation \eqref{even-part-vx-vp} can be calculated with the help of equation \eqref{eq-Jagannathan}. After some simplifications, this summation translates to
\[
  2^{-a} e^\frac{p^2}{2} \lim_{b \to \infty} \left( \frac{b}{b-\frac{x^2}{2}} \right)^{b} 
                                             \, \Fser{1}{1}{b}{a}{\frac{x^2 p^2}{2 (x^2-2b)}}.
\]
Now the limit for $b \to \infty$ can be taken for both factors. The first becomes $e^{x^2/2}$, while the second is $\Fser{0}{1}{-}{a}{-x^2 p^2 /4}$. Putting all this together, we find that the even part of the inner product $\inprod{v(x), w(p)}$ is
\[
    \sum_{n=0}^\infty (-1)^n \beta_{2n}(x) \beta_{2n}(p)
  = \frac{|xp|^{a-\frac{1}{2}}}{2^a \, \Gamma(a)} \, \, \Fser{0}{1}{-}{a}{-\frac{x^2 p^2}{4}}.
\]
The odd part of the inner product is found in the same way, so we have
\begin{equation} \label{vx_vp}
    \inprod{v(x), w(p)}
  = \frac{|xp|^{a-\frac{1}{2}}}{2^a \, \Gamma(a)} 
    \left( \Fser{0}{1}{-}{a}{-\frac{x^2 p^2}{4}} 
           + \frac{ixp}{2a} \, \, \Fser{0}{1}{-}{a+1}{-\frac{x^2 p^2}{4}} \right).
\end{equation}
In the canonical case, many of these expressions simplify significantly. We discuss these simplifications for $a=1/2$ in section \ref{subsec-canonical}.

\subsubsection{Generalized wave functions and the canonical case} \label{subsec-canonical}
Consider an arbitrary state of the system $\GZ{\psi}$, written in Dirac's bra-ket notation. Assume that the eigenstates of the position operator are denoted by $\GZ{x}$. Then the spatial wave function of the system is found by
\[
  \psi(x) = \inprod{x | \psi}.
\]
Similarly, an inner product describes the wave function in the momentum space:
\[
  \psi(p) = \inprod{p | \psi},
\]
where $\GZ{p}$ represents the momentum eigenstates.

We have written the position and momentum eigenvectors as $v(x)$ and $w(p)$ respectively. Their inner product, given by equation \eqref{vx_vp}, represents the wave function of the particle being located at position $x$, when the system is in the momentum eigenstate $p$. This result is compatible with the canonical case where $a=1/2$. In canonical quantization, $\hat{x}$ and $\hat{p}$ are known. The operator $\hat{x}$ is simply multiplication with $x$ and $\hat{p} = -i \partial_x$. For $a=1/2$ equation \eqref{vx_vp} reduces to
\[
  \inprod{v(x), w(p)} = \frac{1}{\sqrt{2} \, \Gamma(\frac{1}{2})} \bigl( \cos(xp) + i \sin(xp) \bigr)
                      = \frac{1}{\sqrt{2 \pi}} \, e^{ixp},
\]
which is an eigenfunction of the canonical interpretation of the operator $\hat{p}$ with eigenvalue $p$.

The case where the eigenstate $\GZ{\psi}$ corresponds to the energy $E$ needs to be handled with a little more care, for there are two energy eigenstates corresponding to $E$. Both $u_0(E)$ and $u_1(E)$ belong to the same energy eigenvalue, thus inducing two independent wave functions $\psi_E^0(x)$ and $\psi_E^1(x)$. The previous results \eqref{vx_v0E} and \eqref{vx_v1E} allow us to write
\begin{equation} \label{psi-E-0}
  \psi_E^0(x) = {\cal{C}}_0(E) \, \frac{|x|^{iE-\frac{1}{2}}}{2 \sqrt{\pi}}
\end{equation}
and
\begin{equation} \label{psi-E-1}
  \psi_E^1(x) = {\cal{C}}_1(E) \, \frac{x|x|^{iE-\frac{3}{2}}}{2 \sqrt{\pi}},
\end{equation}
with $|{\cal{C}}_0|^2 = |{\cal{C}}_1|^2 = 1$. Therefore the general wave function of the particle when the system's energy equals $E$ must be of the form
\begin{equation} \label{wave-function-Hb}
  \psi_E^{(a)}(x) = A \psi_E^0(x) + B \psi_E^1(x),
\end{equation}
with $A$ and $B$ complex coefficients satisfying $|A|^2 + |B|^2 = 1$. This result is compatible with the canonical case as well, which is no surprise since $\psi_E^{(a)}(x)$ is practically independent of $a$. In fact, in equations \eqref{psi-E-0} and \eqref{psi-E-1} $a$ only appears explicitly in the phase factors ${\cal{C}}_0(E)$ and ${\cal{C}}_1(E)$. For $\hat{x} = x$ and $\hat{p} = -i \partial_x$, the Hamiltonian $\hat{H}_b$ converts into
\[
  \hat{H}_b = -i(x \partial_x + \frac{1}{2}),
\]
which, for $a=1/2$, indeed has $\psi_E^0(x)$ and $\psi_E^1(x)$ as eigenfunctions with eigenvalue $E$. Moreover, the general wave function \eqref{wave-function-Hb} is normalized. We have (omitting the superscript $(a)$ for clarity)
\[
  \inprod{\psi_{E'}(x), \psi_E(x)} = \int_{- \infty}^{+ \infty} \psi_{E'}^*(x) \psi_E(x) dx
                                   = \delta(E'-E),
\]
which follows from the well-known identity (for instance to be found in~\cite{Cohen-Tannoudji-77})
\[
  \int_{- \infty}^{+ \infty} e^{ik(x-x_0)} dk = 2 \pi \, \delta(x-x_0).
\]
In summary, we have

\begin{theorem}
  In the Wigner quantization of the Hamiltonian $\hat{H}_b = \hat{x} \hat{p}$, the wave function of the particle with position coordinate $x$, when the total energy of the system equals $E$ is given by
\[
  \psi_E^{(a)}(x) =   A \, {\cal{C}}_0(E) \, \frac{|x|^{iE-\frac{1}{2}}}{2 \sqrt{\pi}}
                    + B \, {\cal{C}}_1(E) \, \frac{x|x|^{iE-\frac{3}{2}}}{2 \sqrt{\pi}},
\]
with $|{\cal{C}}_0|^2 = |{\cal{C}}_1|^2 = 1$ and $|A|^2 + |B|^2 = 1$. \\
The wave function of the particle with position coordinate $x$, when the system's momentum is equal to $p$ is given by
\[
  \phi_p^{(a)}(x) = \frac{|xp|^{a-\frac{1}{2}}}{2^a \, \Gamma(a)} 
  \left( \Fser{0}{1}{-}{a}{-\frac{x^2 p^2}{4}} 
         + \frac{ixp}{2a} \, \, \Fser{0}{1}{-}{a+1}{-\frac{x^2 p^2}{4}} \right).
\]
Both results are compatible with the known expressions for these wave functions in canonical quantization, which occurs when $a=1/2$.
\end{theorem}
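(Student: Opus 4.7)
The plan is to assemble the theorem from the mutual inner products already computed, together with the multiplicity structure of the spectrum of $\hat{H}_b$ described in Theorem \ref{th_specH}.

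First I would handle the energy wave function. By Theorem \ref{th_specH}, the eigenvalue $E$ of $\hat{H}_b$ has multiplicity $2$, with formal eigenvectors $u_0(E) \in V_0$ and $u_1(E) \in V_1$, each delta function normalized. Therefore every energy eigenvector corresponding to $E$ is of the form $\bar{A}\, u_0(E) + \bar{B}\, u_1(E)$ for complex constants $A, B$. Using the antilinearity of the inner product in the first slot, the associated wave function is
\[
  \psi_E^{(a)}(x) = \inprod{\bar{A} u_0(E) + \bar{B} u_1(E),\, v(x)}^{*}
                 = A \inprod{v(x), u_0(E)} + B \inprod{v(x), u_1(E)}.
\]
Substituting the closed-form inner products \eqref{vx_v0E} and \eqref{vx_v1E} gives exactly the claimed expression, with the phases ${\cal{C}}_0(E)$ and ${\cal{C}}_1(E)$ satisfying $|{\cal{C}}_0|^2 = |{\cal{C}}_1|^2 = 1$. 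Delta function normalization of $\psi_E^{(a)}$ with respect to $E$ then forces $|A|^2 + |B|^2 = 1$, which can be verified directly using $\int e^{ik(x-x_0)} dk = 2\pi \delta(x-x_0)$ on each of the four cross terms in $\inprod{\psi_{E'}^{(a)}, \psi_E^{(a)}}$.

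Next, the momentum-representation wave function $\phi_p^{(a)}(x) = \inprod{v(x), w(p)}$ is already furnished by equation \eqref{vx_vp}, so for this part nothing remains beyond quoting the formula.

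Finally, I would verify compatibility with canonical quantization at $a=1/2$. For the momentum wave function one specializes the two $_0F_1$ series: at $a=1/2$ they become $\cos(xp)$ and $(2/xp)\sin(xp)$ (after absorbing the $ixp/(2a)$ prefactor), so the combination collapses to $e^{ixp}/\sqrt{2\pi}$ as shown in the text. For the energy wave function, with $\hat{x}$ acting by multiplication and $\hat{p} = -i\partial_x$, one has $\hat{H}_b = -i(x\partial_x + 1/2)$; a direct check shows that $|x|^{iE-1/2}$ and $x|x|^{iE-3/2}$ are eigenfunctions with eigenvalue $E$, matching $\psi_E^0$ and $\psi_E^1$ up to the unimodular phases ${\cal{C}}_j(E)$. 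The main (and only real) obstacle in the proof is the bookkeeping of the phases ${\cal{C}}_0(E), {\cal{C}}_1(E)$ and the overall normalization constraint $|A|^2+|B|^2=1$; everything else is substitution from previously derived formulas.
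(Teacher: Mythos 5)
Your proposal is correct and follows essentially the same route as the paper: both assemble the energy wave function as a normalized superposition $A\psi_E^0 + B\psi_E^1$ of the two delta-normalized eigenvectors from Theorem \ref{th_specH} using the already-computed inner products \eqref{vx_v0E} and \eqref{vx_v1E}, identify $\phi_p^{(a)}(x)$ with \eqref{vx_vp}, and check the $a=1/2$ limit by reducing the $_0F_1$ series to $\cos$ and $\sin$ and by verifying that $|x|^{iE-1/2}$ and $x|x|^{iE-3/2}$ are eigenfunctions of $-i(x\partial_x+\tfrac{1}{2})$. The only cosmetic difference is your explicit antilinearity bookkeeping with $\bar{A},\bar{B}$, which the paper sidesteps by working directly with $\inprod{v(x),\cdot}$.
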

The wave function in the momentum basis is a superposition of two independent wave functions determined by \eqref{wp-u0E} and \eqref{wp-u1E}.


\section{The Hamiltonian of the free particle $\hat{H}_f = \frac{1}{2} \hat{p}^2$}
Curiously, although many one-dimensional Hamiltonians have been studied before in the context of Wigner quantization, the simplest of them all had been forgotten until now. We choose to fill this lacuna next to the Berry-Keating-Connes Hamiltonian because, concerning Wigner quantization, there are a lot of similarities between $\hat{H}_b$ and the Hamiltonian of the free particle $\hat{H}_f$. Since most of the proofs and methods are very much alike, we will not mention any calculations unless they differ significantly from analogous computations in the previous section.

\subsection{Relation with the $\mathfrak{osp}(1|2)$ Lie superalgebra}
The system of a particle having no potential energy, is described by the Hamiltonian
\begin{equation} \label{ham-free-part}
  \hat{H}_f = \frac{1}{2} \hat{p}^2,
\end{equation}
where $\hat{p}$ is the momentum operator of the particle. We note that the Hamiltonian is independent of the position operator $\hat{x}$. Performing the Wigner quantization for the free particle starts with writing down Hamilton's equations and the equations of Heisenberg for this system. Hamilton's equations give
\[
  \dot{\hat{p}} = - \frac{\partial \hat{H}_f}{\partial x} = 0, \qquad
  \dot{\hat{x}} =   \frac{\partial \hat{H}_f}{\partial p} = \hat{p}.
\]
Together with the equations of Heisenberg (for $\hbar=1$)
\[
  [\hat{H}_f, \hat{p}] = - i \dot{\hat{p}}, \qquad
  [\hat{H}_f, \hat{x}] = - i \dot{\hat{x}}
\]
we obtain a set of compatibility conditions
\[
  [\hat{H}_f, \hat{p}] = 0, \qquad
  [\hat{H}_f, \hat{x}] = - i \hat{p},
\]
which is equivalent to
\begin{equation} \label{CCs-free-particle}
  [\hat{p}^2, \hat{x}] = -2i \hat{p}.
\end{equation}
Just as for the Berry-Keating-Connes Hamiltonian, the operators $\hat{x}$ and $\hat{p}$ subject to the current compatibility conditions \eqref{CCs-free-particle} generate the Lie superalgebra $\mathfrak{osp}(1|2)$. This is not so straightforward to prove as before. 

\begin{theorem}
  If the operators $\hat{x}$ and $\hat{p}$, subject to the relation $[\hat{p}^2, \hat{x}] = -2i \hat{p}$, are considered to be odd elements of some superalgebra, then they must generate the Lie superalgebra $\mathfrak{osp}(1|2)$.
\end{theorem}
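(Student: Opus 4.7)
The plan is to mirror the proof of the earlier proposition for $\hat{H}_b$: introduce $b^\pm = (\hat{x}\mp i\hat{p})/\sqrt{2}$ as odd elements of the superalgebra, translate the compatibility condition into the $b^\pm$ language, and identify the resulting expression with the defining relations $[\{b^+,b^-\},b^\pm] = \pm 2 b^\pm$ of $\mathfrak{osp}(1|2)$. From the definitions one immediately obtains $\hat{p}^2 = -\frac{1}{2}\bigl((b^+)^2+(b^-)^2\bigr) + \frac{1}{2}\{b^+,b^-\}$ and $\hat{x} = (b^+ + b^-)/\sqrt{2}$, so the condition $[\hat{p}^2,\hat{x}] = -2i\hat{p}$ becomes a single identity in the $b^\pm$.

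Second, I would substitute these expressions and collect terms. A direct expansion in the associative enveloping algebra yields the reductions $[(b^+)^2, b^-] = -[\{b^+,b^-\},b^+]$ and $[(b^-)^2, b^+] = -[\{b^+,b^-\},b^-]$, while $[(b^\pm)^2, b^\pm] = 0$ trivially. Plugging all of this into $[\hat{p}^2,\hat{x}] = -2i\hat{p}$ and carrying out the bookkeeping collapses the condition to the single relation
\[
  [\{b^+,b^-\},b^+] + [\{b^+,b^-\},b^-] = 2(b^+ - b^-),
\]
which is exactly the sum of the two $\mathfrak{osp}(1|2)$ defining relations.

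The hard part is splitting this single equation into the two separate relations $[\{b^+,b^-\},b^+] = 2b^+$ and $[\{b^+,b^-\},b^-] = -2b^-$. My strategy is to exploit the $\mathbb{Z}$-grading inherited from the complex decomposition $b^\pm = (\hat{x}\mp i\hat{p})/\sqrt{2}$, in which $b^+$ carries weight $+1$, $b^-$ carries weight $-1$, and $\{b^+,b^-\}$ has weight $0$. Then each bracket $[\{b^+,b^-\},b^\pm]$ is homogeneous of weight $\pm 1$ in the Lie superalgebra generated by $b^+$ and $b^-$ and so must be a scalar multiple of $b^\pm$; the sum constraint then pins down those scalars as $\pm 2$. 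As a final consistency check I would apply the super-Jacobi identity to the triple $(\hat{p},\hat{p},\hat{x})$, which yields the auxiliary identity $[\{\hat{x},\hat{p}\},\hat{p}] = 2i\hat{p}$, and verify that the three even elements $(b^+)^2$, $(b^-)^2$ and $\{b^+,b^-\}$ close under commutators to form the $\mathfrak{su}(1,1)$ even subalgebra, so that the full commutation table of $\mathfrak{osp}(1|2)$ is recovered.
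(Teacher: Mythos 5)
Your setup and reduction are sound and match the paper's: introducing $b^\pm=(\hat{x}\mp i\hat{p})/\sqrt{2}$, substituting $\hat{p}^2=-\tfrac{1}{2}\bigl((b^+)^2+(b^-)^2-\{b^+,b^-\}\bigr)$, and using $[(b^+)^2,b^-]=-[\{b^+,b^-\},b^+]$ and $[(b^-)^2,b^+]=-[\{b^+,b^-\},b^-]$ to collapse the compatibility condition to the single relation $[\{b^+,b^-\},b^++b^-]=2(b^+-b^-)$, which is exactly \eqref{CCs-hulpvgl}. The gap is in your splitting step. The $\mathbb{Z}$-grading you invoke, with $b^\pm$ of weight $\pm 1$, is not ``inherited'' from the linear change of basis $b^\pm=(\hat{x}\mp i\hat{p})/\sqrt{2}$: a change of basis in the odd part induces no grading on the algebra, and the operator that would define this grading is precisely $\mathrm{ad}\,\{b^+,b^-\}$ acting diagonally on $b^\pm$ --- which is the statement $[\{b^+,b^-\},b^\pm]=\pm 2b^\pm$ you are trying to prove. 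The argument is therefore circular. Nor can you impose the grading on a free superalgebra and pass to the quotient, because the defining relation $[\hat{p}^2,\hat{x}]=-2i\hat{p}$ is not manifestly homogeneous for that grading until after the splitting has been established. Your Jacobi-identity check only recovers $[\{\hat{x},\hat{p}\},\hat{p}]=2i\hat{p}$, i.e.\ one of the two conditions \eqref{CCs}; the other, $[\{\hat{x},\hat{p}\},\hat{x}]=-2i\hat{x}$, is exactly the content the splitting must supply, so the check does not close the gap.

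The paper closes it with an associativity argument you are missing. It writes the general odd ansatz $[\{b^+,b^-\},b^+]=Ab^++Bb^-$, deduces $[\{b^+,b^-\},b^-]=(2-A)b^+-(B+2)b^-$ from the summed relation, and then evaluates $[(b^+)^2,(b^-)^2]$ in two ways via the Leibniz rule: expanding on the first argument gives $-2(A-2)(b^+)^2-(B+2)\{b^+,b^-\}$, expanding on the second gives $-2B(b^-)^2-A\{b^+,b^-\}$. Comparing coefficients of the independent even elements $(b^+)^2$, $(b^-)^2$, $\{b^+,b^-\}$ forces $A=2$ and $B=0$, yielding \eqref{osp_def_rel}. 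Replace your grading argument by this computation (or an equivalent application of the super-Jacobi identity to a triple containing both $b^+$ and $b^-$); without it, the two constants in the ansatz remain undetermined and the two defining relations do not follow from their sum.
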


\begin{proof}
  We introduce the operators $b^+$ and $b^-$ as before, see equation \eqref{b+b-}, by
\[
  b^\pm = \frac{\hat{x} \mp i \hat{p}}{\sqrt{2}}.
\]
The aim of this proof is to show that $b^+$ and $b^-$ satisfy the defining relations \eqref{osp_def_rel} of $\mathfrak{osp}(1|2)$:
\[
  \left[ \{b^-, b^+\}, b^\pm \right] = \pm 2b^\pm.
\]

First we want to rewrite the compatibility conditions \eqref{CCs-free-particle} in terms of $b^+$ and $b^-$. For this purpose, we start by noticing that
\begin{equation} \label{CCs-b+b-}
  [\hat{p}^2, b^\pm] = b^+ - b^-.
\end{equation}
Either of these two relations can be fully rewritten in terms of $b^+$ and $b^-$. The result is the same, so let us focus on the commutator of $\hat{p}^2$ and $b^+$. Substituting
\[
  \hat{p}^2 = -\frac{1}{2} \bigl( (b^+)^2 + (b^-)^2 - \{ b^+, b^- \} \bigr)
\]
in this relation, we obtain
\[
  [\{ b^+, b^- \}, b^+] - [(b^-)^2, b^+] = 2(b^+ - b^-).
\]
Writing down the commutators and anticommutators explicitly, it is straightforward to see that
\[
  [(b^-)^2, b^+] = - [\{ b^+, b^- \}, b^-],
\]
so that finally both equations in \eqref{CCs-b+b-} are equivalent to 
\begin{equation} \label{CCs-hulpvgl}
  [\{ b^+, b^- \}, b^+ + b^-] = 2(b^+ - b^-).
\end{equation}

Next, assume that $b^+$ and $b^-$ are odd elements of a Lie superalgebra $L$, that is $b^\pm \in L_{\bar{1}}$, where $L_{\bar{1}}$ is the odd part of the Lie superalgebra $L = L_{\bar{0}} \oplus L_{\bar{1}}$. Anticommutators of the odd elements are situated in the even part $L_{\bar{0}}$ of $L$. It is then necessary that the commutator of an odd and an even element must be an odd element again. Thus we can put
\begin{equation} \label{hulprel1}
  [\{ b^+, b^- \}, b^+] = - [(b^+)^2, b^-] = Ab^+ + Bb^-
\end{equation}
for some constants $A$ and $B$. The first equality again follows from explicit computation of the commutators and anticommutators. With the help of equation \eqref{CCs-hulpvgl} we then find:
\begin{equation} \label{hulprel2}
  [\{ b^+, b^- \}, b^-] = - [(b^-)^2, b^+] = (-A+2) b^+ - (B+2) b^-.
\end{equation}
We are able to determine the constants $A$ and $B$ by calculating $[(b^+)^2, (b^-)^2]$ in two different ways. Indeed, from 
\begin{align*}
  [(b^+)^2, (b^-)^2] & = b^+ [b^+, (b^-)^2] + [b^+, (b^-)^2] b^+ \\
                     & = -2 (A-2) (b^+)^2 - (B+2) \{ b^+, b^- \}
\end{align*}
and 
\begin{align*}
  [(b^+)^2, (b^-)^2] & = b^- [(b^+)^2, b^-] + [(b^+)^2, b^-] b^- \\
                     & = -2B (b^-)^2 - A \{ b^+, b^- \}
\end{align*}
we can conclude that $A=2$ and $B=0$. Substituting this into \eqref{hulprel1} and \eqref{hulprel2}, we obtain the $\mathfrak{osp}(1|2)$ defining relations \eqref{osp_def_rel}.
\end{proof}

Thus, the operators $\hat{x}$ and $\hat{p}$ subject to the compatibility conditions \eqref{CCs-free-particle}, have solutions in terms of $\mathfrak{osp}(1|2)$ generators $b^+$ and $b^-$. Clearly $\hat{x}$ and $\hat{p}$ have the same expression as before, and the Hamiltonian $\hat{H}_f$ can be written as
\[
  \hat{H}_f = -\frac{1}{4} \bigl( (b^+)^2 + (b^-)^2 - \{ b^+, b^- \} \bigr)
\]
Moreover, since $\hat{x}$ and $\hat{p}$ together with the compatibility conditions generate $\mathfrak{osp}(1|2)$, no other Lie superalgebra solutions exist.

\subsection{Energy spectrum of the free particle}
In this section, we will show that the spectrum of the Hamiltonian \eqref{ham-free-part} is equal to $\mathbb{R}^+$ with double multiplicity. The Hamiltonian of the free particle written in terms of the $\mathfrak{osp}(1|2)$ generators $b^\pm$ shows that $\hat{H}_f$ is an even operator in $\mathfrak{osp}(1|2)$:
\[
  \hat{H}_f = - \frac{1}{2} (e-f-h),
\]
with $h$, $e$ and $f$ defined in equation \eqref{def_hef}. As we have already argued, an even operator can be regarded in both $\mathfrak{su}(1,1)$ subspaces $V_0$ and $V_1$ separately, which explains how we will end up with a double multiplicity in the energy spectrum. 

Let $z_0(E)$ be a formal eigenvector of $\hat{H}_f$ with eigenvalue $E$ in the subspace $V_0$. We write this eigenvector as
\[
  z_0(E) = \sum_{n=0}^\infty \epsilon_{2n}(E) e_{2n},
\]
where the $\epsilon_{2n}(E)$ are coefficients we wish to determine. We can let $\hat{H}_f$ operate on this eigenvector, which gives
\[
  \hat{H}_f z_0(E) = - \frac{1}{2} \sum_{n=0}^\infty \epsilon_{2n}(E) 
                       \bigl( \sqrt{(n+1)(n+a)} \, e_{2n+2} + \sqrt{n(n+a-1)} \, e_{2n-2} - (2n+a) \, e_{2n} \bigr).
\]
Since $z_0(E)$ is postulated to be an eigenvector of $\hat{H}_f$ with eigenvalue $E$, we also have
\[
  \hat{H}_f z_0(E) = \sum_{n=0}^\infty E \epsilon_{2n}(E) e_{2n}.
\]
Collecting the coefficients of $e_{2n}$ in both expressions for $\hat{H}_f z_0(E)$, delivers us a three term recurrence relation for the unknown functions $\epsilon_{2n}(E)$:
\[
  (2n-2E+a) \, \epsilon_{2n} = \sqrt{n(n+a-1)} \, \epsilon_{2n-2} + \sqrt{(n+1)(n+a)} \, \epsilon_{2n+2}.
\]
This is recognized to be the recurrence relation \eqref{rec_Laguerre} of the normalized Laguerre polynomials with $\alpha = a-1$ and $x=2E$. So we can identify the $\epsilon_{2n}(E)$ with these polynomials:
\begin{eqnarray*}
  \epsilon_{2n}(E) & = & \tilde{L}_n^{(a-1)}(2E) \\
                   & = & \sqrt{\frac{(a)_n}{n!}} \, {\cal{E}}_0(E) \,\, \Fser{1}{1}{-n}{a}{2E},
\end{eqnarray*}
with 
\[
  {\cal{E}}_0(E) = \frac{ E^\frac{a-1}{2} \, e^{-E} \, \sqrt{2^a} }{ \sqrt{\Gamma(a)} }.
\]
Moreover, due to the chosen normalization we have
\[
  \int_0^{+ \infty} \epsilon_{2m}(E) \epsilon_{2n}(E) dE = \delta_{mn}.
\]
The Laguerre polynomials have a weight function with a positive support, so the spectrum of $\hat{H}_f$ in $V_0$ is $\mathbb{R}^+$. In a similar way, one can define a formal eigenvector of $\hat{H}_f$ with eigenvalue $E$ in $V_1$:
\[
  z_1(E) = \sum_{n=0}^\infty \epsilon_{2n+1}(E) e_{2n+1}.
\]
The same procedure as before reveals that 
\begin{eqnarray*}
  \epsilon_{2n+1}(E) & = & \tilde{L}_n^{(a)}(2E) \\
                     & = & \sqrt{\frac{(a+1)_n}{n!}} \, {\cal{E}}_1(E) \,\, \Fser{1}{1}{-n}{a+1}{2t},
\end{eqnarray*}
with 
\[
  {\cal{E}}_1(E) = \frac{ E^\frac{a}{2} \, e^{-E} \, \sqrt{2^{a+1}} }{ \sqrt{\Gamma(a+1)} }.
\]
Thus, we can conclude that the spectrum of $\hat{H}_f$ in $V_1$ is also equal to $\mathbb{R}^+$. Combining both results, we obtain

\begin{theorem}
  The spectrum of $\hat{H}_f = - (e-f-h)/2$ in the representation space $V$ equals $\mathbb{R}^+$ with multiplicity two.
\end{theorem}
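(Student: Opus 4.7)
My plan is to exploit the decomposition $V = V_0 \oplus V_1$ and work in each $\mathfrak{su}(1,1)$ subspace separately. Because $\hat{H}_f = -(e-f-h)/2$ is an even element of $\mathfrak{osp}(1|2)$, it stabilizes each of $V_0$ and $V_1$, so the spectrum on $V$ is the union of the spectra on the two subspaces, with the multiplicity two in the statement coming from the fact that both restrictions will yield the same set $\mathbb{R}^+$.

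In each subspace, I would introduce a formal eigenvector $z_i(E) = \sum_{n \geq 0} \epsilon_{2n+i}(E) \, e_{2n+i}$ for $i \in \{0,1\}$, rewrite $\hat{H}_f$ using the definitions \eqref{def_hef}, and apply it to $z_i(E)$ with the help of the explicit actions \eqref{action_b+b-}. Collecting the coefficient of $e_{2n+i}$ in the equation $\hat{H}_f z_i(E) = E z_i(E)$ yields a three-term recurrence, which upon inspection is exactly the recurrence \eqref{rec_Laguerre} for the pseudo-normalized Laguerre polynomials $\tilde{L}_n^{(\alpha)}(x)$ with $x = 2E$ and $\alpha = a-1$ in $V_0$, respectively $\alpha = a$ in $V_1$. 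Since a three-term recurrence together with its initial data determines its solutions up to a common factor, this identifies $\epsilon_{2n}(E)$ with $\tilde{L}_n^{(a-1)}(2E)$ and $\epsilon_{2n+1}(E)$ with $\tilde{L}_n^{(a)}(2E)$, matching the expressions already displayed above the theorem.

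The final step is to apply the spectral theorem for unbounded self-adjoint Jacobi operators \cite{Dunford-63, Berezanskii-68}, in exactly the same way as for $\hat{H}_b$ in Section \ref{subsec_spectrum_H}. In each of $V_0$ and $V_1$, the operator $\hat{H}_f$ acts as a Jacobi operator on the orthonormal basis $\{e_{2n+i}\}$, and the orthonormal polynomial system diagonalizing it is the normalized Laguerre family. Hence its spectrum is the support of the associated weight function $e^{-x} x^{\alpha}$ on $[0,\infty)$, i.e.\ $\mathbb{R}^+$; the affine rescaling $x = 2E$ does not affect the support. Taking the union over $V_0$ and $V_1$ gives $\mathbb{R}^+$ with multiplicity two.

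The main obstacle I anticipate is conceptual rather than computational: one needs to know that the Jacobi matrix in question is essentially self-adjoint, i.e.\ that the underlying Laguerre moment problem is determinate, so that the spectral theorem genuinely identifies the spectrum with the support of the orthogonality measure. This is classical for Laguerre weights, and the authors invoke it silently in the parallel argument for $\hat{H}_b$, which justifies treating the free-particle case succinctly.
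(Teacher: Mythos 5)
Your proposal is correct and follows essentially the same route as the paper: restrict $\hat{H}_f$ to $V_0$ and $V_1$, derive the three-term recurrence for the coefficients, identify them with the normalized Laguerre polynomials $\tilde{L}_n^{(a-1)}(2E)$ and $\tilde{L}_n^{(a)}(2E)$, and read off the spectrum as the support $\mathbb{R}^+$ of the Laguerre weight in each subspace. Your explicit remark on the determinacy of the Laguerre moment problem is a point the paper leaves implicit, but it does not alter the argument.
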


Now that we have found the energy eigenstates of $\hat{H}_f$, the corresponding wave function remains to be sought. Crucial for this is the determination of some inner products. 

\subsection{Remaining inner products}
Since $\inprod{v(x), w(p)}$ is already known, the remaining inner products to be found are
\[
  \inprod{v(x), z_0(E)}, \qquad
  \inprod{v(x), z_1(E)}, \qquad 
  \inprod{w(p), z_0(E)}  \qquad \mbox{ and } \qquad 
  \inprod{w(p), z_1(E)}.
\]
The first two are rather straightforward, once the right identity is traced. In~\cite[equation (5.11.3.7)]{Prudnikov-Brychkov-Marichev-86} we find
\begin{equation} \label{eq-IntSeries}
    \sum_{n=0}^\infty \frac{(a)_n}{n!} \, \Fser{1}{1}{-n}{a}{x} \Fser{1}{1}{-n}{a}{y} t^n
  = (1-t)^{-a} \, e^\frac{t(x+y)}{t-1} \, \Fser{0}{1}{-}{a}{\frac{txy}{(1-t)^2}}.
\end{equation}
The eigenvectors of $\hat{x}$ and $\hat{H}_f$ have an inner product that can be found by means of \eqref{eq-IntSeries}. We have
\begin{eqnarray*}
  \inprod{v(x), z_0(E)} & = & \sum_{n=0}^\infty \beta_{2n}(x) \, \epsilon_{2n}(E) \\
                              & = & {\cal{B}}_0(x) {\cal{E}}_0(E) 
                                    \sum_{n=0}^\infty (-1)^n \frac{(a)_n}{n!}
                                                      \, \Fser{1}{1}{-n}{a}{x^2} \Fser{1}{1}{-n}{a}{2E}.
\end{eqnarray*}
Using \eqref{eq-IntSeries} this simplifies to
\begin{equation} \label{vx-z0}
  \inprod{v(x), z_0(E)} = \frac{1}{\sqrt{2^a} \, \Gamma(a)} \, |x|^{a-\frac{1}{2}} \, E^\frac{a-1}{2}
                                \, \Fser{0}{1}{-}{a}{-\frac{2E \, x^2}{4}}. 
\end{equation}
In a similar way, one obtains
\begin{equation} \label{vx-z1}
  \inprod{v(x), z_1(E)} = \frac{1}{\sqrt{2^{a+1}} \, \Gamma(a+1)} \, x |x|^{a-\frac{1}{2}} \, E^\frac{a}{2}
                                \, \Fser{0}{1}{-}{a+1}{-\frac{2E \, x^2}{4}}. 
\end{equation}
The other pair of inner products is a harder nut to crack. We present the highlights of the computation. Omitting a factor of ${\cal{B}}_0(p) {\cal{E}}_0(E)$, the calculation of the inner product $\inprod{w(p), z_0(E)}$ boils down to finding the summation
\begin{equation} \label{sum-wp-z0}
  \sum_{n=0}^\infty \frac{(a)_n}{n!} \, \Fser{1}{1}{-n}{a}{p^2} \Fser{1}{1}{-n}{a}{2E}.
\end{equation}
We apply equation \eqref{lim-2F1} on the hypergeometric function with argument $2E$, before using equations \eqref{eq-Jagannathan} and \eqref{eq-Temme} to simplify the result. After a while, we obtain
\[
  \Gamma(a) \, p^{-2a} \lim_{b \to \infty} \frac{1}{\Gamma(b)} \left( \frac{p^2 b}{2E} \right)^b
                                           \exp {\left( \frac{p^2(2E-b)}{2E} \right)}.
\]
For very large $b$, the Gamma-function can be written as $\sqrt{2 \pi} \, b^{b-1/2} \exp(-b)$, so we can simplify further to reach
\[
  \frac{\Gamma(a)}{\sqrt{2}} \, p^{-2a} \exp(p^2) 
  \lim_{b \to \infty} \sqrt{\frac{b}{\pi}} \exp {\left( b \left[ 1 + \ln(\frac{p^2}{2E}) - \frac{p^2}{2E} \right] \right)}.
\]
At this point, we can introduce the delta function by one of its limit definitions (see for instance~\cite{Cohen-Tannoudji-77})
\[
  \delta(x) = \lim_{\epsilon \to 0^+} \frac{1}{\sqrt{\pi \epsilon}} \exp( -x^2/\epsilon).
\]
Moreover, properties of the delta function allow us to write
\[
    \delta \left( \sqrt{\frac{p^2}{2E} - \ln(\frac{p^2}{2E}) - 1} \right)
  = 2E \, \sqrt{2} \, \delta(p^2-2E),
\]
so that the summation \eqref{sum-wp-z0} can be written as
\[
  \Gamma(a) \, e^{2E} (2E)^{1-a} \, \delta(p^2-2E).
\]
Finding the desired inner product $\inprod{w(p), z_0(E)}$ requires adding a factor ${\cal{B}}_0(p) {\cal{E}}_0(E)$ to the summation \eqref{sum-wp-z0}. Simplifying once again finally gives
\begin{equation} \label{wp-z0}
  \inprod{w(p), z_0(E)} = \sqrt{2} \sqrt{|p|} \, \delta(p^2-2E).
\end{equation}
Analogously, one finds
\begin{equation} \label{wp-z1}
  \inprod{w(p), z_1(E)} = \sqrt{2} \frac{p}{\sqrt{|p|}} \, \delta(p^2-2E).
\end{equation}
At last we are ready to determine the generalized wave functions for the free particle and to compare our results with the canonical case.

\subsection{Generalized wave function and the canonical case}
The free particle has been thoroughly investigated in the past, which makes it easy to check if our results are compatible with what is known. Luckily, they are. We have two wave functions belonging to the same energy eigenvalue $E$. In the position basis, we shall call them $\psi_E^0(x)$ and $\psi_E^1(x)$ and they are defined as the inner products \eqref{vx-z0} and \eqref{vx-z1} respectively. The general wave function of the particle when the system's energy equals $E$, written in the position basis must then be
\begin{equation} \label{wave-function-Hf}
  \psi_E^{(a)}(x) = A \psi_E^0(x) + B \psi_E^1(x),
\end{equation}
with $A$ and $B$ complex coefficients satisfying $|A|^2 + |B|^2 = 1$. This wave function is normalized with respect to the delta function and it is compatible with the canonical case when $a=1/2$. Indeed, for $a=1/2$ equations \eqref{vx-z0} and \eqref{vx-z1} become
\[
  \psi_E^0(x) = \frac{ (2E)^{-\frac{1}{4}} }{ \sqrt{\pi} } \cos(x \sqrt{2E}) 
\]
and
\[
  \psi_E^1(x) = \frac{ (2E)^{-\frac{1}{4}} }{ \sqrt{\pi} } \sin(x \sqrt{2E}).
\]
These are both eigenfunctions with eigenvalue $E$ of the canonical interpretation of the Hamiltonian $\hat{H }_f = - \partial_x^2/2$. Note that, in the canonical case, the solution \eqref{wave-function-Hf} is equivalent to
\[
    \left( \frac{A-iB}{\sqrt{2}} \right) \frac{(2E)^{-\frac{1}{4}} }{ \sqrt{2 \pi} } \, e^{ix \sqrt{2E}}
  + \left( \frac{A+iB}{\sqrt{2}} \right) \frac{(2E)^{-\frac{1}{4}} }{ \sqrt{2 \pi} } \, e^{-ix \sqrt{2E}},
\]
with $|(A-iB)/\sqrt{2}|^2 + |(A+iB)/\sqrt{2}|^2 = 1$. This is the more traditional way of writing the normalized wave function of the free particle, as a superposition of a plane wave moving to the right, and a plane wave going to the left. Note that this canonical notation for the wave function is in accordance with the fact that the energy can be written as $E=p^2/2$ in this case.

\begin{theorem}
  In the Wigner quantization of the Hamiltonian $\hat{H}_f = \frac{\hat{p}^2}{2}$, the wave function of the particle with position coordinate $x$, when the total energy of the system is equal to $E$ is given by
\begin{eqnarray*}
  \psi_E^{(a)}(x) & = &         \frac{A}{\sqrt{2^a} \, \Gamma(a)} \, |x|^{a-\frac{1}{2}} \, E^\frac{a-1}{2}
                                        \, \Fser{0}{1}{-}{a}{-\frac{2E \, x^2}{4}}                                     \\
                  &   & + \quad \frac{B}{\sqrt{2^{a+1}} \, \Gamma(a+1)} \, x |x|^{a-\frac{1}{2}} \, E^\frac{a}{2}
                                        \, \Fser{0}{1}{-}{a+1}{-\frac{2E \, x^2}{4}},
\end{eqnarray*}
with $|A|^2 + |B|^2 = 1$. This result is compatible with the known expression for this wave function in canonical quantization, which occurs when $a=1/2$.
\end{theorem}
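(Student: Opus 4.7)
The plan is to exploit the double multiplicity of the energy spectrum established in the preceding theorem: each eigenvalue $E>0$ of $\hat{H}_f$ corresponds to a two-dimensional eigenspace spanned by $z_0(E) \in V_0$ and $z_1(E) \in V_1$, which are mutually orthogonal since $V_0 \perp V_1$, and each is delta-function normalized in its own subspace by the general principle recorded at the end of Section \ref{subsec_spectrum_H}. The general energy eigenstate is therefore $\psi_E = A\, z_0(E) + B\, z_1(E)$, and imposing $\inprod{\psi_{E'},\psi_E} = \delta(E'-E)$ forces $|A|^2+|B|^2=1$.

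The position-basis wave function is then simply
\[
  \psi_E^{(a)}(x) \;=\; \inprod{v(x), \psi_E} \;=\; A \, \inprod{v(x), z_0(E)} \;+\; B \, \inprod{v(x), z_1(E)},
\]
and substituting the two inner products already computed in equations \eqref{vx-z0} and \eqref{vx-z1} produces the displayed formula verbatim. No new analytic work is needed here; the preceding sections have packaged everything.

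For the canonical consistency at $a=1/2$, the key tool is the reduction of the confluent limit series to trigonometric functions,
\[
  \Fser{0}{1}{-}{\tfrac{1}{2}}{-\tfrac{z^2}{4}} = \cos z, \qquad
  \Fser{0}{1}{-}{\tfrac{3}{2}}{-\tfrac{z^2}{4}} = \frac{\sin z}{z},
\]
together with $\Gamma(1/2)=\sqrt{\pi}$ and $\Gamma(3/2)=\sqrt{\pi}/2$. Setting $z=x\sqrt{2E}$ in the general formula and collecting the powers of $2$ and $E$ should collapse the two summands to $\psi_E^0(x) = (2E)^{-1/4}\cos(x\sqrt{2E})/\sqrt{\pi}$ and $\psi_E^1(x) = (2E)^{-1/4}\sin(x\sqrt{2E})/\sqrt{\pi}$, which were already identified above as eigenfunctions of $\hat{H}_f = -\partial_x^2/2$ with eigenvalue $E$; the standard linear recombination into $e^{\pm i x \sqrt{2E}}$ then recovers the textbook plane-wave superposition.

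No step presents a substantive obstacle, as the difficult work lay in the two inner-product calculations of the previous subsection. The only place that demands care is the bookkeeping of the $\sqrt{2^a}$ and $\Gamma(a)$ prefactors when specialising to $a=1/2$: the overall normalizing constant must land on $(2E)^{-1/4}/\sqrt{\pi}$ to match the canonical expressions, and a misplaced factor of $\sqrt{2}$ there is the most plausible source of error.
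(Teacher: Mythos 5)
Your proposal is correct and follows essentially the same route as the paper: the general eigenstate is the superposition $A z_0(E)+B z_1(E)$ over the two-fold degenerate eigenspace, the wave function is read off from the already-computed inner products \eqref{vx-z0} and \eqref{vx-z1}, and the $a=1/2$ check reduces the $_0F_1$ series to $\cos(x\sqrt{2E})$ and $\sin(x\sqrt{2E})/(x\sqrt{2E})$, with the prefactors indeed collapsing to $(2E)^{-1/4}/\sqrt{\pi}$ as you anticipate.
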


In the momentum basis, the wave function is a superposition of the independent wave functions determined by equations \eqref{wp-z0} and \eqref{wp-z1}.

\section{Conclusions}
We have looked at the Wigner quantization of two different one-dimensional systems. The first system is described by the Berry-Keating-Connes Hamiltonian $\hat{H} = \hat{x} \hat{p}$ and in the second part the Hamiltonian of the free particle $\hat{H}_f = \hat{p}^2/2$ is investigated.

Although both systems are entirely different, there are a lot of similarities in their Wigner quantization. Each time solutions for the compatibility conditions were found in terms of generators of the orthosymplectic Lie superalgebra $\mathfrak{osp}(1|2)$. The actions of the operators $\hat{x}$ and $\hat{p}$ are therefore found by looking at the positive discrete series representations of $\mathfrak{osp}(1|2)$, which are characterized by a parameter $a$. We find that the representation space on which all operators act, can be written as a direct sum $V = V_0 \oplus V_1$. Both Hamiltonians also act on this Hilbert space, but their action is confined to one of the two subspaces. The fact that these Hamiltonians have a spectrum with double multiplicity must be attributed to this observation, with the two subspaces $V_0$ and $V_1$ having the same structure.

Likewise, there are some differences to be found when looking at the two Hamiltonians. The spectrum of the Berry-Keating-Connes Hamiltonian covers the entire real axis, while the spectrum of $\hat{H}_f$ only contains positive energy values. The reason for this is that the orthogonal polynomials describing the energy eigenstates have a different support. The support of the weight function of the Meixner-Pollaczek polynomials, appearing for $\hat{H} = \hat{x} \hat{p}$ equals $\mathbb{R}$, while the weight function of the Laguerre polynomials, related to the free particle system, has the positive real axis as its support.

The framework of Wigner quantization is less restrictive than canonical quantization. Therefore, generalized wave functions of the systems can be constructed. These wave functions, two for each Hamiltonian, are expected to be dependent on the representation parameter $a$. Remarkably, for $\hat{H} = \hat{x} \hat{p}$ this is not very much the case. Only the phase factors of the two independent wave functions contain $a$. In contrast with this, we find that the essential structure of the wave functions for the free particle is altered when $a$ changes. Here, we have an actual generalization of the canonical solution.

The latter remark suggests that one is able to retrieve the canonical case in some way. Indeed, one specific representation of $\mathfrak{osp}(1|2)$ corresponds to the canonical picture. For each Hamiltonian we find back the known canonical results for $a=1/2$.

\subsection*{Acknowledgments}
G.\ Regniers was supported by project P6/02 of the Interuniversity Attraction Poles Programme (Belgian State -- 
Belgian Science Policy).


\end{document}